%% file: paper.tex
\documentclass[10pt,conference]{IEEEtran}
\IEEEoverridecommandlockouts

\usepackage{amsmath, amsthm, amsfonts, amssymb, amscd, mathtools, bm}
\usepackage{stmaryrd}

\usepackage{algorithmicx}
\usepackage{algorithm}
\usepackage{algpseudocode}
\def\NoNumber#1{{\def\alglinenumber##1{}\State #1}\addtocounter{ALG@line}{-1}}
\algnewcommand{\LeftComment}[1]{\Statex \(\triangleright\) #1}

\def\dvfs{$\mathsf{DVFS\textnormal{-}DNA}$\xspace}
\def\static{$\mathsf{Static\textnormal{-}DNA}$\xspace}

\usepackage{listings}

\usepackage{graphicx}
\usepackage{caption, subcaption}
\usepackage{float}
\usepackage{adjustbox}

\usepackage[dvipsnames]{xcolor}
\usepackage{soul}

\usepackage{textcomp}
\usepackage{latexsym}
\usepackage{xspace}
\usepackage{babel}
\usepackage{url}
\usepackage{xurl}
\usepackage[dvipsnames]{xcolor} 

\usepackage{balance}

\usepackage{lastpage}
\usepackage{enumerate}
\usepackage{enumitem}
\usepackage{multirow}
\usepackage{multicol}
\usepackage{marginnote}
\usepackage{stfloats}

\usepackage{hyperref}

\newtheorem*{remark}{Remark}
\definecolor{gray}{rgb}{0.4,0.4,0.4}
\definecolor{darkblue}{rgb}{0.0,0.0,0.6}
\definecolor{cyan}{rgb}{0.0,0.6,0.6}
\definecolor{backcolour}{rgb}{0.95,0.95,0.92}

\hypersetup{
    colorlinks=true,
    urlcolor=NavyBlue,
    linkcolor=black,
    citecolor=black
}

\lstdefinelanguage{XML}
{
  morestring=[b]",
  morestring=[s]{>}{<},
  morecomment=[s]{<?}{?>},
  stringstyle=\color{black},
  identifierstyle=\color{darkblue},
  keywordstyle=\color{cyan},
  morekeywords={xmlns,version,type}
}

\newcommand{\differential}{{\rm{d}}}

\newenvironment{packeditemize}{
\begin{itemize}
  \setlength{\itemsep}{0.3pt}
  \setlength{\parskip}{2pt}
  \setlength{\parsep}{0pt}
}{\end{itemize}}

\renewcommand{\IEEEauthorrefmark}[1]{\textsuperscript{#1}}

\newtheorem{theorem}{Theorem}

\newtheorem{proposition}{Proposition}

\newcommand{\linh}[1]{{\color{magenta} [\underline{\bf LP:} #1]}}

\newcommand{\reva}[1]{{\color{black} #1\xspace}}
\newcommand{\revg}[1]{{\color{black} #1\xspace}}

\begin{document}

\title{Generative Profiling for Soft Real-Time Systems and its Applications to Resource Allocation}

\author{
  \IEEEauthorblockN{
    Georgiy A. Bondar$^*$\IEEEauthorrefmark{1}\thanks{$^*$Equal contribution.} \quad
    Abigail Eisenklam$^*$\IEEEauthorrefmark{2} \quad
    Yifan Cai\IEEEauthorrefmark{2} \quad
    Robert Gifford\IEEEauthorrefmark{2} \quad
    Tushar Sial\IEEEauthorrefmark{3} \\
    Linh Thi Xuan Phan\IEEEauthorrefmark{2} \quad
    Abhishek Halder\IEEEauthorrefmark{3}
  }
  \\[0.5em]
  \IEEEauthorblockA{
    \IEEEauthorrefmark{1}University of California, Santa Cruz \quad
    \IEEEauthorrefmark{2}University of Pennsylvania \quad
    \IEEEauthorrefmark{3}Iowa State University
  }
}

\maketitle

\begin{abstract}

Modern real-time systems require accurate characterization of task timing behavior to ensure predictable performance, particularly on complex hardware architectures. Existing methods, such as worst-case execution time analysis, often fail to capture the fine-grained timing behaviors of a task under varying resource contexts (e.g., an allocation of cache, memory bandwidth, and CPU frequency), which is necessary to achieve efficient resource utilization. In this paper, we introduce a novel generative profiling approach that synthesizes context-dependent, fine-grained timing profiles for real-time tasks, including those for unmeasured resource allocations. Our approach leverages a nonparametric, conditional multi-marginal Schr\"odinger Bridge (MSB) formulation to generate accurate execution profiles for unseen resource contexts, with maximum likelihood guarantees. We demonstrate the efficiency and effectiveness of our approach through real-world benchmarks, and showcase its practical utility in a representative case study of adaptive multicore resource allocation for  real-time systems.

\end{abstract}

\pagestyle{plain}
\thispagestyle{plain}

\input{intro}

\input{prediction}

\input{accuracy-eval}

\input{usecases}

\input{sched-eval}

\input{related}


\section{Conclusion}

\noindent We have presented a generative stochastic model for nonparametrically generating
context-dependent profiles for tasks in complex real-time systems.
Our solution requires only a small subset of empirical data to 
generate profiles with high accuracy, including for unseen resource contexts. It achieves orders of magnitude  
reduction in measurement time while providing statistical guarantees.
Through case studies and experimental evaluations, we demonstrated the practical application of our approach and its performance benefits in real-world real-time systems.



\newpage

\section*{Acknowledgment}
\noindent 
This work was supported in part by NSF grants CNS-1955670, CNS-2111688 and CCF-2326606.

\balance
{
\bibliographystyle{abbrv}
\bibliography{paper,abby}

}

\newpage
\setcounter{page}{1}
\input{supplementaryRTAS2026}

\end{document}

%% file: intro.tex
\section{Introduction}\label{sec:intro}

\noindent Modern real-time systems execute increasingly sophisticated software on complex hardware architectures. Ensuring timing predictability in these systems relies critically on {\em accurate characterization of task execution under varying resource configurations}. 
On a multicore platform, a task’s execution behavior---such as the rates of cache misses, memory requests and instruction retirement---can be highly sensitive to its {\em resource context}, which includes the amount of shared cache, memory bandwidth, and CPU frequency allocated to it~\cite{acun2019dvfs,dna-rtas21,xu-rtas19}. For instance, as shown in Fig.~\ref{fig:fftwithcontexts}, the PARSEC $\mathsf{fft}$ benchmark exhibits distinct execution phases with varying instruction rates, which change in response to CPU frequency, cache and memory bandwidth allocation. 
Accurately estimating these timing profiles is critical for effective timing analysis and resource allocation in modern real-time systems. Such profiles enable context-aware execution time distributions for probabilistic timing analysis~\cite{kim2005exact,davis2019survey,cai-rtns24,bozhko2021monte} and efficient multi-phase task scheduling~\cite{rasco-emsoft25}. Furthermore, they are central to adaptive multicore resource allocation~\cite{dna-rtas21} and power management~\cite{acun2019dvfs} techniques, which are essential for improving predictability and resource utilization in real-time systems.
\begin{figure}[htbp]
    \centering
    \begin{subfigure}[b]{0.45\textwidth}
        \centering
        \includegraphics[width=\textwidth]{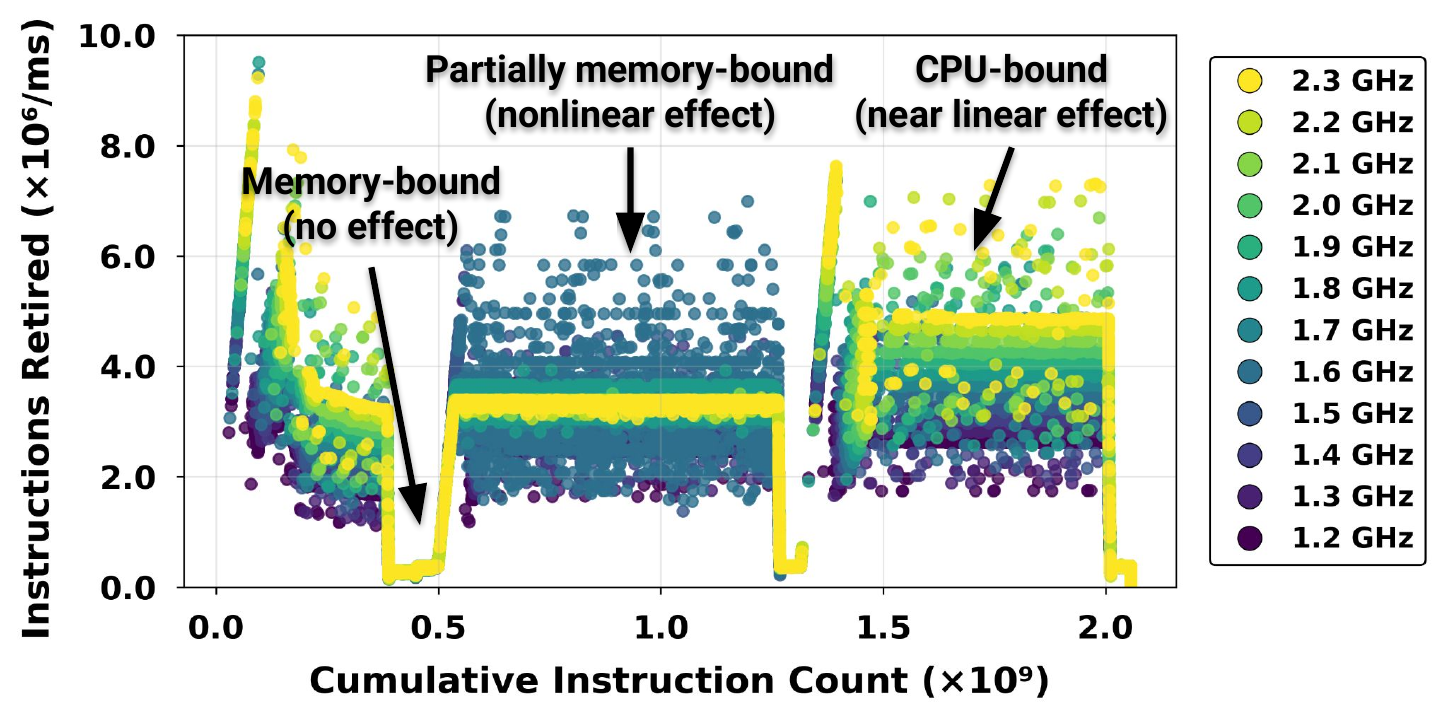}\vspace{-1ex}
        \caption{$\mathsf{fft}$ with 10\% of shared cache and memory bandwidth}
        \label{fig:fft22}
    \end{subfigure}
    \hfill
    \begin{subfigure}[b]{0.45\textwidth}
        \centering
        \includegraphics[width=\textwidth]{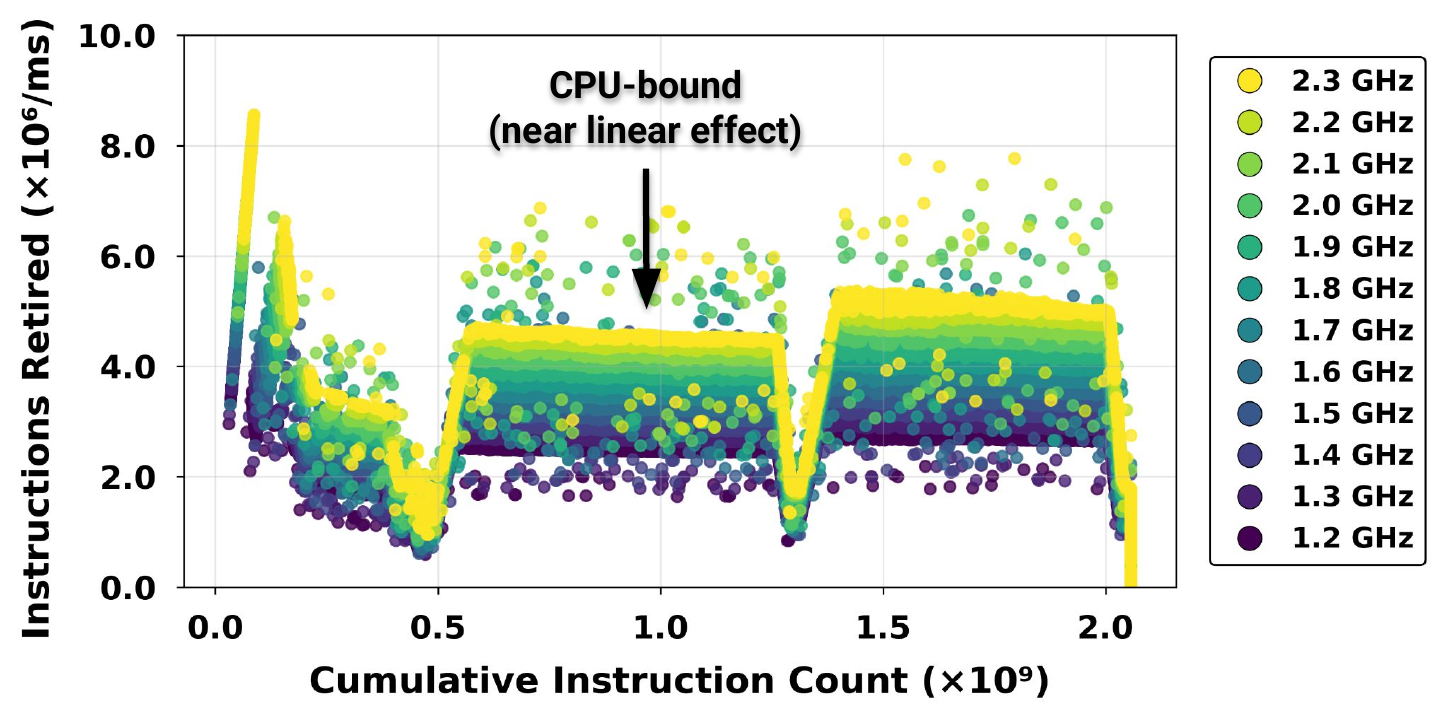}\vspace{-1ex}
        \caption{$\mathsf{fft}$ with 70\% of shared cache and memory bandwidth}
        \label{fig:fft1414}
    \end{subfigure}
    \caption{Effect of CPU frequency, cache, and memory bandwidth on the instruction retirement rate of $\mathsf{fft}$~\cite{splash2x}. The benchmark shows distinct phases, with varying relationships (no effect, linear, nonlinear) between instruction rate and CPU frequency, depending on resource allocation (middle phase).}
    \vspace{-3ex}
    \label{fig:fftwithcontexts}
\end{figure}

\noindent{\bf Limitations of existing work.} 
While traditional analysis methods, such as worst-case execution time (WCET) analysis, can provide bounds or distributions of total execution times, they often fail to capture the fine-grained timing variability under different resource allocations. To obtain fine-grained execution profiles, current solutions have relied on exhaustive measurements. Although this approach can provide fine-grained insights, it can quickly become impractical to cover {\em all possible resource contexts} as the number of tasks or allocation configurations increases. Machine learning-based methods have been proposed to address this limitation; however, they often focus on {\em coarse-grained} execution metrics, such as WCET bounds or aggregated performance distributions. Recently, Bondar et al.~\cite{bondar-2024-stochastic,bondar-2024-psmsbp} applied stochastic learning models to estimate task timing behavior under a given resource context based on measured execution snapshots. However, these methods remain limited to the resource contexts for which data is available and cannot predict behavior in {\em unseen contexts}---resource contexts for which no training data exist.

\noindent{\bf Contributions.}
In this paper, we propose a novel {\em generative profiling} approach to synthesize context-dependent, fine-grained timing profiles for {\em any} resource context, including those that have not been measured. For instance, given \emph{sparse execution snapshots} from a few measured contexts \reva{(e.g., an allocation of 10\% of cache and memory bandwidth and an allocation of 80\% of cache and memory bandwidth)}, our model can infer \emph{complete temporal} profiles for {\em all} resource contexts in between (e.g., an allocation of 20\% of the cache and 50\% of the memory bandwidth). 

\begin{figure}[t]
    \centering
    \includegraphics[width=\linewidth]{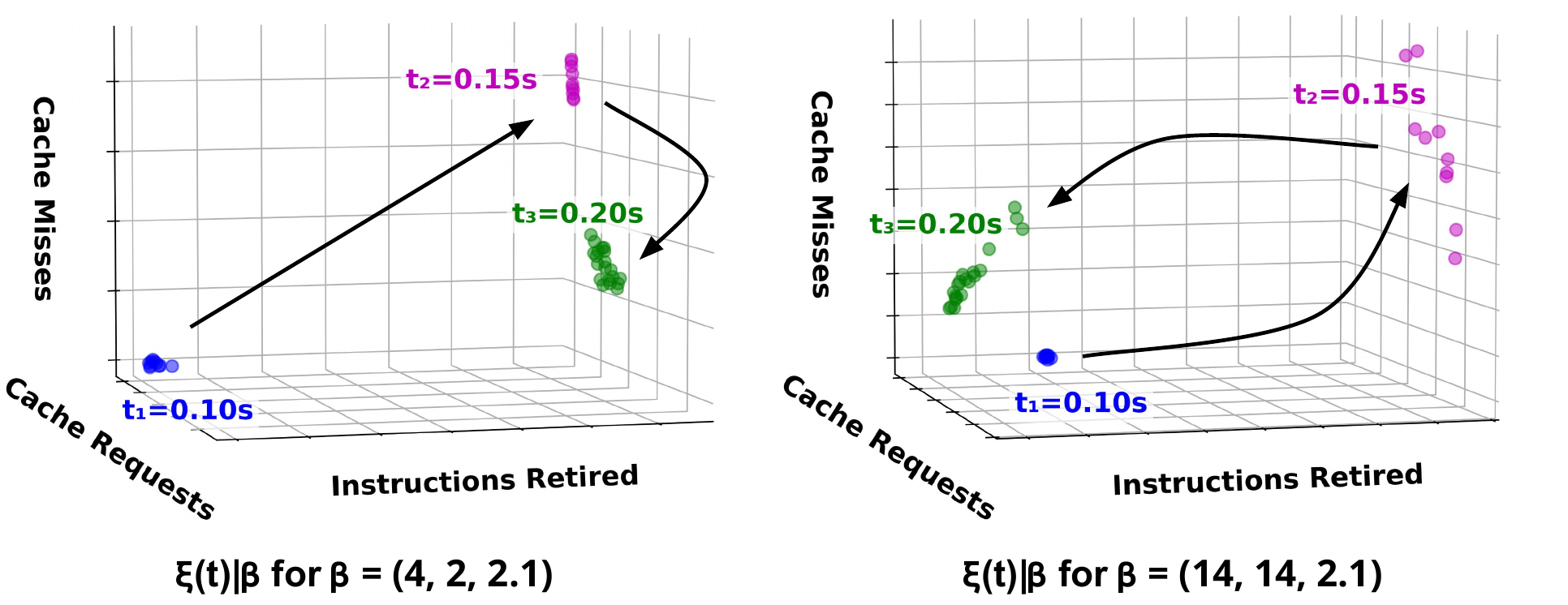}\vspace{-1ex}
    \caption{Distributions of microarchitectural execution states (rates of instructions, cache requests and cache misses) of $\mathsf{fft}$ at three time points under two different resource contexts.}
    \vspace{-3ex}
    \label{fig:dynamic_resource_state}
\end{figure}

Predicting execution profiles for unseen contexts is challenging due to the dynamic nature of task execution and the varying correlations among execution states (e.g., rates of cache misses, memory requests, and instruction retirement). As illustrated in Fig.~\ref{fig:dynamic_resource_state}, these correlations evolve with time and resource context, complicating their modeling with traditional parametric approaches, such as Gaussian mixtures or Markov chains. 
To address this, we introduce a nonparametric learning approach based on a {\em conditional} multi-marginal Schr\"odinger Bridge (MSB) formulation, enabling us to capture complex dependencies from sparse data without making any distributional assumptions.
Our solution efficiently generates accurate fine-grained execution profiles for unseen contexts while providing maximum likelihood guarantees. 

In summary, we make the following contributions:

\begin{packeditemize}
    \item A generative stochastic model for non-parametrically generating context-dependent execution profiles of real-time tasks under unseen contexts, based on a conditional MSB formulation. Our solution is accurate, efficient, and provides maximum likelihood guarantees.

	\item Evaluations of the accuracy and efficiency of our approach using real-world benchmarks.   

	\item A case study demonstrating the practical utility of generative profiles in multicore real-time systems, including an adaptive resource allocation algorithm, its prototype implementation, and experimental results. 

\end{packeditemize}

To the best of our knowledge, this is the first generative profiling solution for real-time multicore systems capable of producing execution profiles for {\em unmeasured resource contexts} with statistical guarantees.

%% file: prediction.tex
\section{Background and Problem Formulation}\label{sec:prediction}

\subsection{Stochastic Modeling of Context-Dependent Profiles} \label{sec:stochastic}
\noindent{\bf System assumptions.} The system contains a set of  tasks that execute on a multicore hardware platform. Tasks running on the cores share a common set of resources, such as the last-level shared cache, memory bandwidth, and main memory.
We assume that each resource can be partitioned into  equal-size allocation units, and that some number of resource units can be assigned to a core at run time (e.g., as done in~\cite{dna-rtas21}). For example, the cache can be efficiently partitioned using methods such as CAT~\cite{intel-2015-cat} (for Intel architecture) or Lockdown by Master (LbM) mechanism~\cite{Mancuso13, arm-pl310} (for ARM architecture), and the memory bandwidth can be partitioned using MemGuard~\cite{Yun16-memguard-journal}. 
The platform is equipped with a per-core power management technique such as Dynamic Voltage and Frequency Scaling (DVFS), which typically supports a range of frequencies at discrete step sizes~\cite{cpufreq-docs}, resulting in a finite number of voltage/frequency settings for a core. For simplicity, we treat frequency as a type of resource.  Thus, a ``resource context'' denotes an assignment of both multicore shared resources {\em and} a voltage/frequency setting to a core. 

\reva{In this work, we focus on tasks with a single execution path, but our generative profiling method is entirely data-driven and can also be applied to general programs. We leave the interpretation and evaluation of generative profiles for multi-path programs as future work.}


\noindent{\bf Definitions.} Formally, the system contains $b$ resource types, where $b$ is some fixed positive integer. A \emph{resource context} $\beta$ is then defined as a vector with $b$ components:
\begin{align}
\beta = \left(\beta_1,\beta_2, \hdots, \beta_b\right)^{\top}\in\mathcal{B}\subset\mathbb{R}^b,
\label{defbeta}    
\end{align} 
where $\beta_i$ denotes the allocation of the $i$-th resource type. 
For example, $b = 3$ resource types may comprise of the shared cache ($\beta_1$), memory bandwidth ($\beta_2$), CPU frequency ($\beta_3$). Then, $\beta = (2, 4, 2.3)$ denotes an allocation of $2$ cache partitions, $4$ bandwidth partitions, and frequency of $2.3$ GHz. 

For some fixed positive integer $m$, we define the \emph{microarchitectural execution state} as an $m$ component vector
\begin{align}
\xi = (\xi_1, \xi_2, \dots, \xi_m)^{\top}\in\mathcal{X}\subset\mathbb{R}^m.
\label{defxi}    
\end{align}
An example of $m = 3$ components of $\xi$ are the number of retired instructions ($\xi_1$), cache requests ($\xi_2$), and cache misses ($\xi_3$) in some unit time interval.

For a fixed resource context $\beta$, the  execution state $\xi$ varies with time $t$. \emph{Mathematically, $\xi(t)$ is a vectorial stochastic process conditioned on a resource context $\beta$}. We denote this conditional stochastic process as 
\begin{align}
\xi(t)\mid\beta.
\label{DefConditionalProcess}\end{align}

The stochastic process \eqref{DefConditionalProcess} is correlated both {\em temporally} (across $\xi$ vectors at different times) and {\em spatially} (across different components of $\xi$ at the same time). Intuitively, if the task issues more shared cache requests at some time $t$, it may issue fewer requests in some future time $t' > t$, as the data may still be resident in the private L1 or L2 cache. Similarly, for any given time $t$ during the task's execution, the instruction rate is correlated to the rates of cache requests and misses, with the correlation depending on the nature of the workload (e.g., compute-intensive or  memory-intensive). Even when $\beta$ and the initial condition $\xi(t=0)$ are fixed \reva{for a deterministic single-path task, repeated executions of that task} under the same resource context results in slightly different sample paths of $\xi(t)$ because of variability in the hardware state and OS behaviors. 

Finally, \emph{profiles} are formally defined as the $\mathbb{R}^{m}$-valued sample paths of the conditional stochastic process \eqref{DefConditionalProcess}.


\begin{figure*}[t]
    \centering
    \includegraphics[width=.68\linewidth]{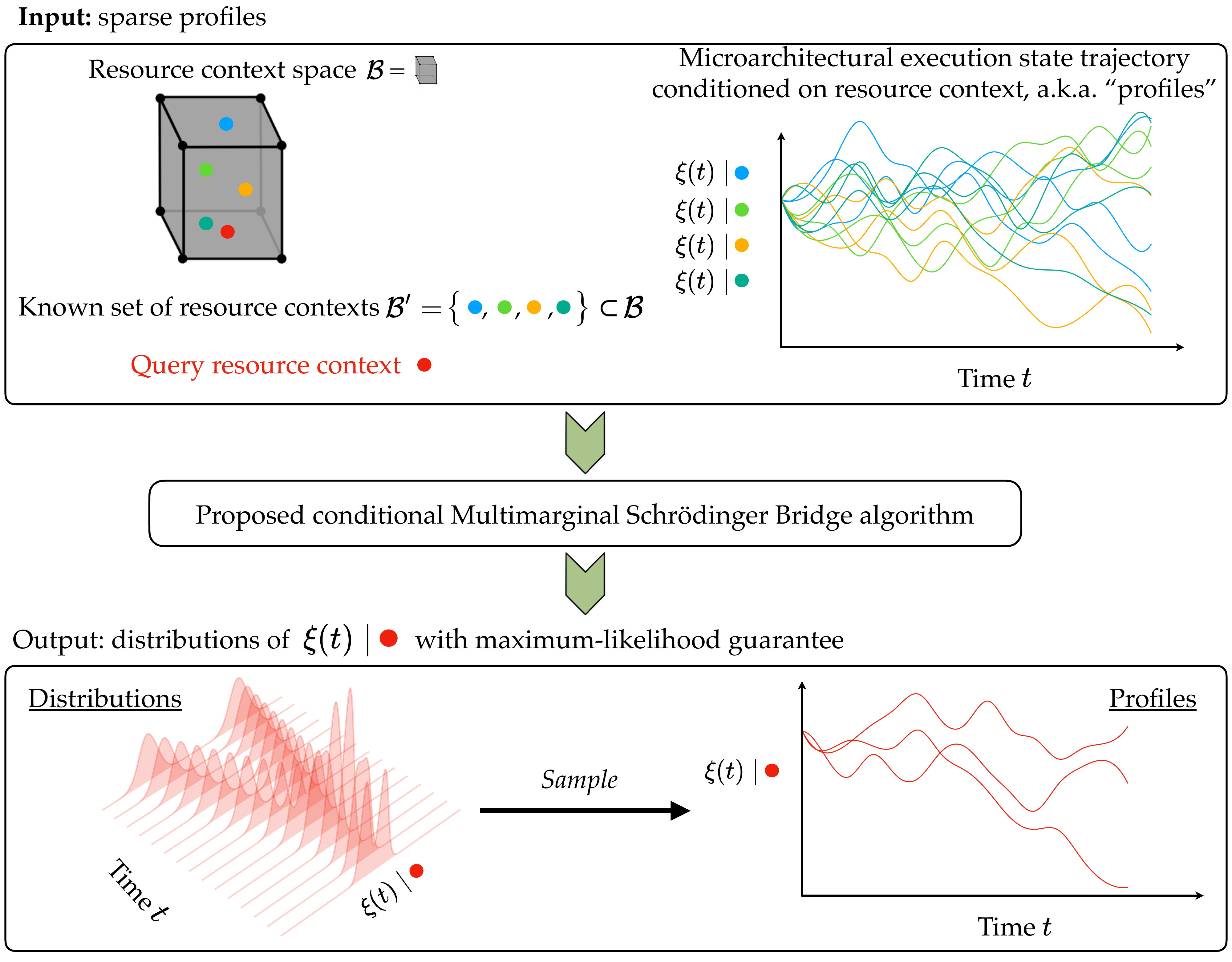} \vspace{-1ex} 
    \caption{Our proposed generative profiling algorithm scheme with maximum-likelihood guarantee. From empirical profiles of a task's microarchitectural execution states, conditioned on a sparse set of resource allocations $\mathcal{B}'$, we generate `synthetic' profiles for the unknown resource allocations.
    }  \vspace{-3ex}
    \label{fig:conceptschematic}
\end{figure*}

\subsection{Nonparametric Learning}\label{sec:whysb}

\noindent Consider the special case of a single resource context $\beta$, i.e., where the set of all possible resource contexts $\mathcal{B}:=\{\beta\}$ is singleton. For this fixed $\beta$, suppose we observe the distributions\footnote{In practice, scattered point clouds, as in Fig.~\ref{fig:dynamic_resource_state}.} of $\xi(t)\vert\beta$ at two fixed times $t_1$ and $t_2$, i.e., we have the snapshots of the execution state at those times for multiple runs of the task. Let us call these snapshots $\mu_1$ and $\mu_2$, respectively. With this information, we desire to learn distributions $\mu_t$ to estimate the statistics of the execution state \emph{at all times} $t\in[t_1,t_2]$. In other words, what we desire to learn is not a single distribution, but a \emph{distribution-valued path} $t\mapsto\mu_t$ parameterized by $t\in[t_1,t_2]$.

Naturally, we must enforce $\mu_{t_1}=\mu_1$ and $\mu_{t_2}=\mu_2$ \reva{to guarantee that the model is \emph{consistent} with the known observations}. For $t\in(t_1,t_2)$, we also want $\mu_t$ to have some mathematical guarantee of its predictive accuracy.
Since we do not make any assumption about the nature of the distribution $\mu_t$, how it evolves with time, or how it is affected by the resource allocation $\beta$, our learning should be nonparametric in nature.


This is where the bimarginal Schr\"odinger Bridge Problem (SBP) and its solution -- the Schr\"odinger Bridge (SB) -- comes in. In the stochastic process literature, the word ``bridge'' \cite[Ch. IV.40]{rogers2000diffusions} refers to a stochastic process whose sample path connects two given \emph{finite dimensional endpoints} at two fixed times $t_1$ and $t_2$. The SB is a similar construction except that the endpoints are now \emph{distributions}. Specifically, among all continuous curves $t\mapsto\mu_t$ connecting the given endpoints $\mu_{t_1}=\mu_1$ and $\mu_{t_2}=\mu_2$, the SB is the one with \emph{maximum likelihood} guarantee, i.e., the highest probability curve among all possible distribution-valued continuous curves connecting the $\mu_1,\mu_2$. The SB originated in 1931-1932 in the works of Schr\"{o}dinger \cite{schrodinger1931umkehrung,schrodinger1932theorie} and in recent years, its importance is being rapidly recognized in generative AI \cite{de2021diffusion,shi2022conditional,liu2022deep,xie2024bridging} and stochastic control \cite{leonard2014survey,chen2021stochastic,caluya2021wasserstein,nodozi2023neural}.

Two mathematical guarantees make the predictive learning model we desire to be most naturally represented by an SB\reva{, rather than another learning model, such as a neural network or regression trees.} \emph{First}, the SB is consistent with known observations $\mu_1$ and $\mu_2$. \emph{Second}, unlike other generative models such as flow matching \cite{lipman2023flow,esser2024scaling}, the SB comes with the most parsimonious statistical guarantee: maximum likelihood certificate. It is appropriate here for, as mentioned, 
we make no assumption about the true mathematical structure of the stochastic dynamics 
underlying the platform on which our tasks run. Importantly, it is also possible to compute the SB in a nonparametric manner, without making any assumptions on the structure\footnote{For example, mixture of Gaussians, exponential family \cite[Ch. 2]{amari2016information} etc.} of $\mu_1,\:\mu_2,$ or $\mu_t$.

In this work, we will employ the multimarginal extension (see Section~\ref{sec:generative}) of SB for the learning of $\mu_t$. We also address the additional complexity that arises when $\mathcal{B}$ is a non-singleton set -- in this case, $\beta$ itself must be treated as a random vector with its own distribution supported over $\mathcal{B}$, statistically correlated with $\xi(t)$.
\reva{Addressing the case in which $\beta$ is not fixed is critical, as it allows us to generate profiles for previously unseen resource contexts and thereby model the fine-grained, context-dependent behavior of tasks without exhaustive measurements across all contexts.}
We describe our learning algorithm next. 



\vspace{-1ex}
\section{Generative Profiling Algorithm}\label{sec:generative}
\noindent  Recall that the set $\mathcal{B}$ in \eqref{defbeta} denotes the set of all possible resource contexts. For instance, let $b=3$ with a total of $N_{\mathsf{ca}}$ cache partitions, $N_{\mathsf{bw}}$ memory bandwidth partitions, $N_{\mathsf{freq}}$ possible settings of CPU frequency. Then, {$\mathcal{B}\subset\mathbb{R}^{3}$ is a finite set of cardinality $N_{\mathsf{ca}}\cdot N_{\mathsf{bw}} \cdot N_{\mathsf{freq}}$.} 

For {\em any}
$\beta\in\mathcal{B}$, we desire a profile $\xi(t) \mid \beta$. However, if $\mathcal{B}$ is large, it may be feasible to obtain these profiles empirically for only a subset $\mathcal{B}'\subsetneq\mathcal{B}$. Moreover, technical limitations in the profiling mechanism or the task itself may lead to these `profiles' consisting of a temporally coarse set of $n_{s}$ snapshots. Therefore, the result of this limited profiling is an empirical distribution of the conditional random vector
\begin{align}
\xi(t_\sigma)\mid\beta
\label{defSnapshotDistribution}  
\end{align} 
for each measurement instance (snapshot) $t_{\sigma\in\llbracket n_s\rrbracket}$,
and for each resource context $\beta\in\mathcal{B}'$. Hereafter, we adopt the finite set notation for the set of $n_s$ snapshots taken for a given task:
\[\llbracket n_s\rrbracket := \{1,2,\hdots,n_s\}.\]
In other words, \eqref{defSnapshotDistribution} is the state $\xi$ at time $t_\sigma$ conditional on the resource context $\beta$.

For a \emph{fixed} resource context $\beta$, recent work \cite{bondar-2024-psmsbp,bondar-2024-stochastic} computed the most likely distributional path $\mu_t$ for $t\in[t_1,t_{n_s}]$ from snapshots $\mu_{\sigma\in\llbracket n_s\rrbracket}$. That is, for a given task and resource context, prior work filled in the profiling gaps between discrete snapshots in time. In this work, we extend these methodologies to not just fill in gaps in time, but also to {\em fill in gaps between the sparse set of resource contexts} $\mathcal{B}'$. 
Specifically, we generate distributions (and thereby, profiles) for each context $\beta \in \mathcal{B} \setminus \mathcal{B}'$ under which the workload was never explicitly measured.

For the generation of these distributions, we will make use of the Multimarginal Schr\"odinger Bridge Problem (MSBP). Different from prior work, we introduce the \emph{conditional} MSBP, explain its solution, and how we use the same to generate profiles for all possible resource contexts. Fig. \ref{fig:conceptschematic} shows the overall schematic of the proposed generative profiling algorithm. Our approach takes as inputs empirical resource usage profiles of a task for a set of `known' contexts $\mathcal{B}'\subset\mathcal{B}$, where  $\mathcal{B}$ is the set of all possible contexts. Solving the conditional MSBP (to be discussed in the upcoming sections) then allows us to obtain the maximum-likelihood distribution of the task's microarchitectural  execution state at \emph{any time}, and for \emph{any} $\beta\in\mathcal{B}$. These distributions can themselves then be sampled, at various times, to obtain `synthetic' execution profiles \emph{for unknown resource allocations}.


\revg{In the following, we present the conditional MSBP, building on the exposition of \cite{bondar-2024-psmsbp} for the \emph{unconditional} MSBP. Sec. \ref{sec:conditionalmsbp} formulates the conditional MSBP, Sec. \ref{subsec:DulaityForMSBP} presents its solution, and Sec. \ref{subsec:FromConditionalMSBPtoGenerativeProfiling} illustrates an application of generative profiling to multicore resource allocation.}

\subsection{Conditional MSBP}\label{sec:conditionalmsbp}
\noindent For any time $t$, define the augmented state 
$$\eta(t) := \begin{pmatrix}
\xi(t)\\
\beta
\end{pmatrix}\in\mathcal{X}\times\mathcal{B}\subset\mathbb{R}^{m+b},$$
and let $\eta(t)\sim\mu_{t}$ where $\sim$ denotes ``follows the distribution". By Bayes' theorem \cite[p. 169]{bertsekas2008introduction}, we have
\begin{align}
\xi(t) \mid \beta \sim \frac{\mu_t}{\int_\mathcal{X} \mu_t \:\differential\xi} \:.
\label{ComputeConditional}
\end{align} 
The numerator in \eqref{ComputeConditional} is the joint distribution of the augmented state $\eta(t)$, and the denominator is the marginal distribution of the resource budget $\beta\in\mathcal{B}\subset\mathbb{R}^{b}$.

Below, we will detail how to learn the numerator $\mu_t$ in \eqref{ComputeConditional} via the MSBP. Since the denominator in \eqref{ComputeConditional} is independent of time $t$, it can be pre-computed before solving the MSBP.

Given a set of $n_{b}$ resource contexts $\{\beta^j\}_{j\in\llbracket n_b\rrbracket}$ where for each we have the corresponding $n_d$ profiles $\{\xi^{i,j}(t)\}_{i\in\llbracket n_d\rrbracket}$, we construct the \emph{empirical distributions}
\begin{align}
\mu_\sigma := \frac{1}{n_dn_b} \sum_{i=1}^{n_d} \sum_{j=1}^{n_b}\delta(\eta-\eta^{i,j}(t_\sigma)), \quad \forall\sigma\in\llbracket n_{s}\rrbracket,
\label{defmusigmaempirical}    
\end{align}
supported over the augmented state space $\mathcal{X}\times\mathcal{B}\subset\mathbb{R}^{d+b}$,  where $\delta$ denotes the Dirac delta, i.e., 
$$\delta\left(\eta - \eta_{0}\right) := \begin{cases}
1 & \text{if} \quad \eta = \eta_0,\\
0 & \text{otherwise.}
\end{cases}
$$
Thus, each distribution $\mu_\sigma$ consists of scattered data points at $\eta^{i,j}$, representing samples of microarchitectural execution state $\xi^{i,j}$ conditioned on the resource context $\beta^{j}$. The $n_{s}$ measurement instances $t_{\sigma\in\llbracket n_s\rrbracket}$ are the times
\[ 0=t_1<t_2<\dots<t_{n_s-1}<t_{n_s},\]
at which `snapshots' of $\xi$ were taken for each of the $n_d$ profiles.

Given the $n_s$ empirical distributions thus constructed, our goal is to find the most likely measure-valued path $t\mapsto\mu_t$, where $\eta(t)\sim\mu_t$ $\forall t\in[t_1,t_{n_s}]$,
satisfying the distributional constraints 
\begin{align}
\eta(t_\sigma)\sim\mu_\sigma \qquad\forall\sigma\in\llbracket n_s\rrbracket.
\label{DistributionalConstraints}
\end{align}
Therewith, we can sample the distributions $\mu_t$ along this path with arbitrary temporal granularity to obtain our synthetic profiles. The path $t\to\mu_t$ itself can be obtained from the multimarginal Schr\"odinger bridge (MSB) between our known distributions $\{\mu_\sigma\}$. MSBP enables this by finding the maximum-likelihood probability mass transport plan (the MSB) -- the solution of an MSBP -- between the known distributions $\{\mu_\sigma\}$. 

\noindent{\bf Mass Transport Plan ($\bm{M}$).} A \emph{transport plan} between the distributions $\{\mu_\sigma\}$ is an order-$n_s$ tensor $\bm{M}\in(\mathbb{R}^{n_{d}n_{b}})_{\geq 0}^{\otimes n_s}$, where $[\bm{M}_{i_1,\dots,i_{n_s}}]$ is the probability mass transported between the points $\eta^{i_1}(t_1),\dots,\eta^{i_{n_s}}(t_{n_s})$, and where $\otimes$ denotes the tensor product. Such a plan must also `conserve' probability mass, i.e., for all $\sigma\in\llbracket n_s\rrbracket$ the mass transported to/from all points in the support of $\mu_\sigma$ must equal to $\mu_\sigma$. Concretely, let
\begin{equation}
    \Big[{\rm{proj}}_\sigma(\bm{M})_j\Big] := \!\!\!\!\!\sum_{i_1,\dots,i_{\sigma-1},i_{\sigma+1},\dots,i_{n_s}}\!\!\!\!\!\!\!\!\!\bm{M}_{i_1,\dots,i_{\sigma-1},j,i_{\sigma+1},\dots,i_{n_s}}.\label{DefUnimargProj}
\end{equation}
This constraint can then be expressed as
\begin{equation}
{\rm{proj}}_\sigma(\bm{M})=\mu_\sigma \quad\quad\forall\sigma\in\llbracket n_s\rrbracket,
\label{MSBP_constraint}
\end{equation}
wherein the mapping \[{\rm{proj}}_\sigma:(\mathbb{R}^{n_{d} n_{b}})_{\geq 0}^{\otimes n_s}\to\mathbb{R}_{\geq 0}^{n_{d}n_{b}}.\] is called the \emph{unimarginal projection} of $\bm{M}$ onto $\mu_\sigma$. Similarly, we may define the \emph{bimarginal projection} of $\bm{M}$ onto its $(\sigma_1,\sigma_2)$th components as the mapping 
\[{\rm{proj}}_{\sigma_1,\sigma_2}:(\mathbb{R}^{n_{d}n_{b}})_{\geq 0}^{\otimes n_s}\to\mathbb{R}_{\geq 0}^{n_{d}n_{b}\times n_{d}n_{b}},\]
where
\begin{align}
   & \Big[{\rm{proj}}_{\sigma_1,\sigma_2}(\bm{M})_{j,\ell}\Big] := \nonumber\\ 
   & \sum_{\{i_\sigma\}_{\sigma\in\llbracket n_s\rrbracket\setminus\{\sigma_1,\sigma_2\}}}\!\!\!\!\!\!\!\!\!\!\!\!\!\!\bm{M}_{i_1,\dots,i_{\sigma_1-1},j,i_{\sigma_1+1},\dots,i_{\sigma_2-1},\ell,i_{\sigma_2+1},\dots,i_{n_s}}\:. \label{DefBimargProj}
\end{align}
Intuitively, ${\rm{proj}}_{\sigma_1,\sigma_2}(\bm{M})$ is a restriction of $\bm{M}$ to two distributions $(\mu_{\sigma_1},\mu_{\sigma_2})$. This projection may also be viewed as a scattered distribution supported over the finite set $\{\eta^i(t_{\sigma_1})\}_{i\in\llbracket n_{d}n_{b}\rrbracket}\times\{\eta^i(t_{\sigma_2})\}_{i\in\llbracket n_{d}n_{b}\rrbracket}$ with $\mu_{\sigma_1}$ and $\mu_{\sigma_2}$ as statistical marginals.

\begin{figure}[t]
    \centering
    \includegraphics[width=.95\linewidth]{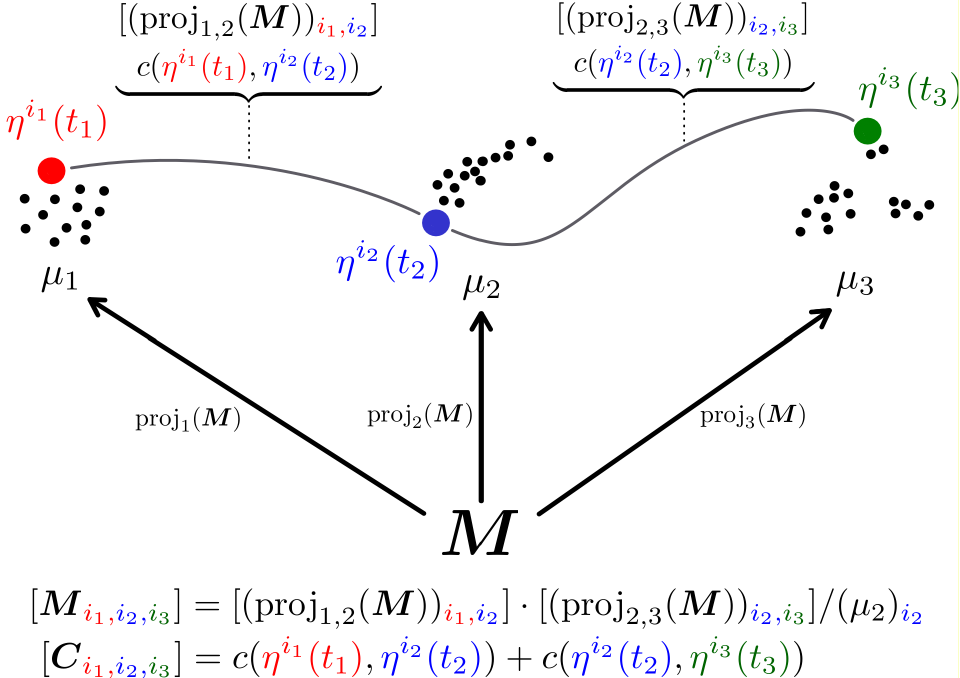}\vspace{-.5ex}
    \caption{The relationship between a transport plan $\bm{M}$, its projections, and the transport cost $\bm{C}$ for $n_s=3$ empirical distributions $\{\mu_\sigma\}_{\sigma\in\llbracket n_s\rrbracket}$. Unimarginal projections equal the distributions, and for the three arbitrary colored points, $[{\bm{M}}_{i_1,i_2,i_3}]$ encodes the total probability mass transported, and $[{\bm{C}}_{i_1,i_2,i_3}]$ the per-unit cost thereof, along the grey-colored path therebetween. Similarly, the bimarginal projections of $\bm{M}$ and the pairwise cost $c$ pertain to the pairwise sections of the path.}
    \vspace{-3ex}
    \label{fig:MSBP_overview}
\end{figure}

\noindent{\bf Ground Cost ($\bm{C})$.} As many such transport plans $\bm{M}$ may exist, we introduce a transport cost function or ``ground cost"
$$c:(\mathbb{R}^{n_{d}n_{b}})^{n_s}\to\mathbb{R}_{\geq 0},$$
which is such that $c(\eta^{i_1}(t_1),\dots,\eta^{i_{n_s}}(t_{n_s}))$ is the cost of transporting unit probability mass between the $n_{s}$ tuple $\left(\eta^{i_1}(t_1),\dots,\eta^{i_{n_s}}(t_{n_s})\right)$\footnote{\revg{As opposed to the previously introduced $\eta^{i,j}$ notation, $\eta^i$, single-indexed, is used when context $\beta^j$ is not of relevance, such as when computing cost.}}. Intuitively, this cost encodes the `distance' between the supports of the $\{\mu_\sigma\}$. With this cost function, we may then define $\bm{C}\in(\mathbb{R}^{n_{d}n_{b}})_{\geq 0}^{\otimes n_s}$ as tensor of order $n_{s}$, whose entries are 
\begin{align}
[\bm{C}_{i_1,\dots,i_{n_s}}] = c(\eta^{i_1}(t_1),\dots,\eta^{i_{n_s}}(t_{n_s})),
\label{CostTensorEntriesGeneral}    
\end{align}

Figure \ref{fig:MSBP_overview} shows how a valid transport plan $\bm{M}$ (i.e., one satisfying \eqref{MSBP_constraint}), its projections, and the transport cost $\bm{C}$ relate to the distributions $\{\mu_\sigma\}$. Note that in the context of this work, the snapshots $\mu_\sigma$ are indexed by times $t_\sigma$, which imposes a natural `path' structure to the cost tensor $\bm{C}$, i.e., \eqref{CostTensorEntriesGeneral} specializes to
\begin{align}
[\bm{C}_{i_1,\dots,i_{n_s}}] = \sum_{j=1}^{n_s-1}[C^j_{i_{j},i_{j+1}}],
\label{CostTensorPathStructure}
\end{align}
where the $C^j\in\mathbb{R}^{n_d\times n_d}$ are matrices whose $(i,j)$th entry is the value $c(\eta^{i_j}(t_j),\eta^{i_{j+1}}(t_{j+1}))$ for some cost function $c$ (in this work, we use the squared Euclidean cost $c(\cdot,\cdot):=\|\cdot-\cdot\|_2^2$). Intuitively if $n_s=3$ (see Figure \ref{fig:MSBP_overview}), \eqref{CostTensorPathStructure} states that the cost of transport between $\mu_1,\:\mu_2,$ and $\mu_3$ is equal to the sum of the costs between $(\mu_1$ and $\mu_2)$ and $(\mu_2$ and $\mu_3)$.

The introduction of a ground cost $\bm{C}$ allows for the assignment to any transport plan $\bm{M}$ a mass transport cost $$\langle\bm{C},\bm{M}\rangle:=\sum_{i_{1},\hdots,i_{n_{s}}}[\bm{C}_{i_1,\dots,i_{n_s}}][\bm{M}_{i_1,\dots,i_{n_s}}],$$ where $\langle\cdot,\cdot\rangle$ denotes the Hilbert-Schmidt inner product. Thus, we may compare probability mass transport plans in terms of this transport cost. For instance, if $c(\eta^{i_1}(t_1),\eta^{i_2}(t_2))$ in Figure \ref{fig:MSBP_overview} is relatively large, a plan which transports more probability mass between these two points may have a higher total mass transport cost.

\noindent{\bf Problem Formulation.} The MSBP refers to finding the transport plan $\bm{M}$ with the minimal entropy-regularized probability mass transport cost, expressed as the optimization problem
\begin{subequations}\label{DiscreteMSBP}
\begin{align}
& \bm{M}_{\rm{opt}}:=\underset{\bm{M}\in\left(\mathbb{R}^{n_{d}n_{b}}\right)^{\otimes n_s}_{\geq 0}}{\arg\min}~\langle\bm{C}+\varepsilon\log\bm{M},\bm{M}\rangle \label{DiscreteMSBPobj} \\
& \text{subject to}~~{\rm{proj}}_{\sigma}\left(\bm{M}\right) = \bm{\mu}_{\sigma}\quad\forall\sigma\in\llbracket n_s\rrbracket, \label{DiscereteMSBPconstr}
\end{align}
\end{subequations}
where $\varepsilon>0$ is the entropy regularization\footnote{If $\varepsilon=0$, then the objective \eqref{DiscreteMSBPobj} becomes the probability mass transport cost $\langle\bm{C},\bm{M}\rangle$, and \eqref{DiscreteMSBP} becomes the multimarginal Monge-Kantorovich optimal transport problem \cite{pass2015multi}.} parameter. We refer to the minimizer of \eqref{DiscreteMSBP} as the Multimarginal Schr\"odinger Bridge (MSB) and denote it as $\bm{M}_{\rm{opt}}$. 

What makes the MSBP \eqref{DiscreteMSBP} pertinent to learning from distributional data is a mathematical result from the theory of large deviations \cite{dembo2009large}, specifically Sanov's theorem \cite{sanov1958probability}, \cite[Sec. II]{follmer1988random}. This theorem establishes that the minimizer of \eqref{DiscreteMSBP} is precisely the \emph{most likely} joint distribution subject to the observational constraints \eqref{DiscereteMSBPconstr} imposed at times where measurements are available. In other words, solving \eqref{DiscreteMSBP} is equivalent to solving a \emph{constrained maximum likelihood problem on the space of probability measure-valued curves $t\mapsto\mu_{t}$ traced over time}. \reva{Mathematical details for this maximum likelihood guarantee are provided in Supplementary Section S1.
}

\subsection{Leveraging Duality to Solve Conditional MSBP}\label{subsec:DulaityForMSBP}
\noindent The MSBP \eqref{DiscreteMSBP} is a \emph{strictly convex} program in $\left(n_{d}n_{b}\right)^{n_s}$ decision variables. Thus, the existence-uniqueness of the minimizer $\bm{M}_{\rm{opt}}$ is guaranteed. Using strong Lagrange duality \cite[Ch. 5.2.3]{boyd2004convex}, it follows that the MSB admits the structure 
\begin{align}
\bm{M}_{\rm{opt}}=\bm{K}\odot\bm{U},\quad \bm{K}=\exp\bigg(\frac{-\bm{C}}{\varepsilon}\bigg),\quad \bm{U}\!\!=\!\!\!\!\bigotimes_{\sigma\in\llbracket n_s\rrbracket}\!\!u_\sigma,
\label{Moptstructure}
\end{align}
wherein $\odot$ denotes the elementwise product, and the vectors
\[u_\sigma:=\exp(\lambda_\sigma/\varepsilon)\in\mathbb{R}^{n_{d}n_{b}}_{>0}\]
are the exponential-transformed Lagrange multipliers $\lambda_\sigma$ associated with the equality constraints \eqref{DiscereteMSBPconstr}. Note that the tensors $\bm{K},\bm{U}\in(\mathbb{R}^{n_{d}n_{b}})_{\geq 0}^{\otimes n_s}$. 

Thanks to this structure \eqref{Moptstructure} for the optimal solution, we only need to determine the $\{u_\sigma\}_{\sigma\in\llbracket n_s\rrbracket}$, which are obtained by the Sinkhorn iterative scheme \cite{sinkhorn1964relationship,sinkhorn1967concerning}
\begin{equation}\label{eq:MultimarginalSink}   
{u}_{\sigma} \leftarrow {u}_{\sigma} \odot {\mu}_{\sigma}\oslash{\rm{proj}}_{\sigma}\left(\bm{K}\odot\bm{U}\right) \quad\quad\forall\sigma\in\llbracket n_s\rrbracket . 
\end{equation}
In \eqref{eq:MultimarginalSink}, the symbol $\oslash$ denotes elementwise division. The Sinkhorn recursions \eqref{eq:MultimarginalSink} are strictly contractive over the positive orthant $\mathbb{R}^{n_{d}n_{b}}_{>0}$ with respect to Hilbert's projective metric \cite{birkhoff1957extensions,bushell1973hilbert,kohlberg1982contraction}
\begin{align}
{\texttt{Hilb}}(p,q):=\log \left(\frac{\max _{i=1, \ldots, n} p_i / q_i}{\min _{i=1, \ldots, n} p_i / q_i}\right) \,\forall p, q\in\mathbb{R}^{n}_{>0}.
\label{defHilbertMetric}    
\end{align}
Therefore, by the Banach contraction mapping theorem \cite{banach1922operations}, the recursions are guaranteed to converge to a unique fixed point with linear rate of convergence.

\begin{algorithm}[t!]
\caption{Generative profiles via conditional MSBP}
\begin{algorithmic}[1]
    \Require Empirical profiles $\{\xi^{i,j}(t_{\sigma})\}$ for $\beta^{j}\in\mathcal{B}'$, $(i,j)\in\llbracket n_{d}\rrbracket\times\llbracket n_{b}\rrbracket$, where snapshot index $\sigma\in\llbracket n_{s}\rrbracket$, \revg{profile granularity $\Delta t$}, entropic regularization parameter $\varepsilon>0$
    \vspace*{0.1in}
    \State $\eta^{i,j}(t_{\sigma}) \gets \!\!\begin{pmatrix}\xi^{i,j}(t_{\sigma})\\ \beta^{j}\end{pmatrix}\!\!\in\mathbb{R}^{m+b}$ \Comment{augmented state samples}
    \State Construct $\{\mu_\sigma\}_{\sigma\in\llbracket n_s\rrbracket}$ using \eqref{defmusigmaempirical} \Comment{empirical distributions}
    \vspace*{-0.1in}
    \NoNumber{\LeftComment{construct $n_{s}-1$ squared Euclidean distance matrices}}
    \For{$j \in 1:n_{s}-1$}
        \State $[C^j_{i_{j},i_{j+1}}]\gets \|\eta^{i_j}(t_j)-\eta^{i_{j+1}}(t_{j+1})\|_2^2$
    \EndFor\\
    
    \State Construct $\bm{C}$ using \eqref{CostTensorPathStructure} \Comment{cost tensor}

    \State $\{u_\sigma\}_{\sigma\in\llbracket n_s\rrbracket} \gets {\texttt{MSBP\_Solve}}({\bm{C}},\{\mu_\sigma\}_{\sigma\in\llbracket n_s\rrbracket},\varepsilon)$\\
    
    \For{$t \in \revg{t_1:\Delta t: t_{n_s}}$}
        \State Compute $\mu_t$ via \eqref{BimargPostprocessing}-\eqref{eq:mu_interpolation} \Comment{predict joint distribution}
        \For{$\beta\in\mathcal{B}$}
            \State $\nu_t \gets \mu_{t}/\left(\frac{1}{n_{b}}\sum_{j=1}^{n_{b}}\delta(\beta-\beta^{j})\right)$ \Comment{predict per 
            \eqref{ComputeConditional}} 
            
            \State $\xi(t) \mid \beta \gets \texttt{maxlikelihood}(\nu_t)$
        \EndFor
    \EndFor
\end{algorithmic}
{\bf{Result:}} Synthetic profiles $\{\xi(t)\mid\beta\}_{t\in\revg{t_1:\Delta t: t_{n_s}}}$ for all $\beta\in\mathcal{B}$
\label{alg:comp_framework}
\end{algorithm}
\begin{algorithm}[t!]
\caption{MSBP Solver $\texttt{MSBP\_Solve}$}
\begin{algorithmic}[1]
    \Require Cost tensor $\bm{C}$, probability distributions $\{\mu_\sigma\}_{\sigma\in\llbracket n_s\rrbracket}$ entropic regularization parameter $\varepsilon>0$, random vector generator $\texttt{rand}$, numerical tolerance $\texttt{tol}$, maximum number of iterations $\texttt{maxiter}$
    \State $\bm{K}\gets\exp\left(-\bm{C}/\varepsilon\right)$ \Comment{scaled elementwise exponential}
    \State $u_{\sigma}(:,1)\gets\texttt{rand}_{n_{d}n_{b}\times 1}\;\forall\sigma\in\llbracket n_s\rrbracket$ \Comment{random init.}
    \State $\texttt{idx} \gets 1$ \Comment{ initialize Sinkhorn recursion index}
    \State ${\texttt{err}}_{\sigma}(\texttt{idx})\gets 1_{n_{d}n_{b}\times 1}\;\forall\sigma\in\llbracket n_s\rrbracket$ \Comment{initialize errors}
    \While{($\max_{\sigma\in\llbracket n_s\rrbracket} \texttt{err}_{\sigma}(\texttt{idx})> \texttt{tol}$) \textbf{and} $({\texttt{idx}} <  {\texttt{maxiter}})$}
    \LeftComment{Sinkhorn recursion \eqref{eq:MultimarginalSink}}
        \For{$\sigma\in\llbracket n_{s}\rrbracket$}
            \State $u_\sigma(:,\texttt{idx}+1) \leftarrow {u}_{\sigma}(:,\texttt{idx})\odot\mu_\sigma\oslash{\rm{proj}}_{\sigma}\!\!\left(\bm{K}\odot\bm{U}\right)$
        \State \parbox[t]{\dimexpr\linewidth-\algorithmicindent}{   ${\texttt{err}}_{\sigma}(\texttt{idx}+1)\gets{\texttt{Hilb}}(u_\sigma(:,\texttt{idx}),$  \\
        \hspace*{2em}$ u_\sigma(:,\!\texttt{idx}\!+\!1))$ \Comment{error in Hilbert metric~\eqref{defHilbertMetric} } }
        \EndFor
    \State $\texttt{idx}\gets \texttt{idx}+1$    
    \EndWhile
\end{algorithmic}
{\bf{Result:}} Converged $\{u_\sigma\}_{\sigma\in\llbracket n_s\rrbracket}$ defining the MSB ${\bm{M}}_{\rm{opt}}$
\label{alg:msbp_solver}
\end{algorithm}

Notice also that both the formulation and the solution of the MSBP \eqref{DiscreteMSBP} are nonparametric in the sense that no statistical parametrization (e.g., mixture of Gaussian, exponential family, knowledge of moment or sufficient statistic) is assumed whatsoever on the joint $\bm{M}_{\rm{opt}}$.

\subsection{From Conditional MSBP to Generative Profiling}\label{subsec:FromConditionalMSBPtoGenerativeProfiling}
\noindent\textbf{Maximum likelihood distributions.}
Once $\bm{M}_{\rm{opt}}$ is computed using \eqref{Moptstructure}-\eqref{eq:MultimarginalSink}, we obtain the distributions $\mu_t$ in \eqref{ComputeConditional} as follows. Let
\begin{align}
M_{\rm{opt}}^\sigma := {\rm{proj}}_{\sigma,\sigma+1}(\bm{M}_{\rm{opt}}).
\label{BimargPostprocessing}    
\end{align}
For $t\in[t_1,t_{n_s}]$, using \eqref{BimargPostprocessing}, we compute
\begin{equation}
    \mu_t = \sum_{i=1}^{n_{d}n_{b}}\sum_{j=1}^{n_{d}n_{b}}[(M_{\rm{opt}}^\sigma)_{i,j}]\delta\Big(\eta-\bigr((1-\lambda)\eta^i(t_\sigma)+\lambda\eta^j(t_{\sigma+1})\bigr)\Big)
\label{eq:mu_interpolation}
\end{equation}
where $t\in[t_\sigma,t_{\sigma+1}]$ and $\lambda:=\tfrac{t-t_\sigma}{t_{\sigma+1}-t_\sigma}\in[0,1]$. Each such $\mu_t$ is a weighted scattered distribution of $n_{d}^{2}n_{b}^2$ data points, and represents the most likely distribution for $\eta(t)$ such that $\eta(t_\sigma)\sim\mu_\sigma$ for all $\sigma\in\llbracket n_s\rrbracket$ and given transport cost $\bm{C}$. \revg{In other words, $\eta(t)$ is the least assumptive model which explains the data ($\mu_\sigma$ and $\bm{C}$, in this case). We provide more details on the maximum-likelihood guarantee in Supp. Sec. S1.
}

Notice that the unimarginal projection \eqref{DefUnimargProj} is needed for the Sinkhorn recursion \eqref{eq:MultimarginalSink}, while the bimarginal projection \eqref{DefBimargProj} is needed for the post-processing \eqref{BimargPostprocessing}. {Efficient computation of these projections is detailed in Supp. Sec. S2. 
}

\noindent\textbf{Computational complexity.} The complexity of the Sinkhorn recursion \eqref{eq:MultimarginalSink} is governed by that of the unimarginal projections \eqref{DefUnimargProj} which, in general, is exponential in $n_s$. As shown in previous work (\cite{bondar-2024-psmsbp,bondar-2024-stochastic}), this complexity becomes \emph{linear} in $n_s$ when we take into account the temporal path structure \eqref{CostTensorPathStructure} of our cost (see Supp. Sec. S2
for details).

Thus, we may solve \eqref{DiscreteMSBP} with computational complexity $\mathcal{O}\left((n_{s} - 1)n_{d}^{2}n_{b}^{2}\right)$. This linear complexity with respect to $n_{s}$ is particularly significant considering that the primal formulation \eqref{DiscreteMSBP} has an exponential complexity $\mathcal{O}\left(\left(n_{d}n_{b}\right)^{n_{s}}\right)$.

\noindent\textbf{Maximum likelihood profiles.} Having computed both the numerators and denominators in \eqref{ComputeConditional}, we obtain the conditional distributions of the random vector $\xi(t)\mid\beta$.
By re-sampling the computed conditional distributions \eqref{ComputeConditional}, we are able to generate high conditional probability samples, and therefore generative or synthetic profiles. This re-sampling can, for instance, be done by simply returning the top few high probability samples since the distribution values evaluated at the samples, and not just the samples of \eqref{ComputeConditional}, are available from MSB computation. Alternatively, this re-sampling can be done using existing diffusion models \cite{sohl2015deep,song2019generative,song2020score} for generative AI.

The overall computational framework discussed above is summarized in Algorithm \ref{alg:comp_framework}. First, we form the distributions $\{\mu_\sigma\}$ from our empirical profiles for the known context set $\mathcal{B}'$ (lines 1-2). Second, we construct the cost tensor $\bm{C}$ from the squared-Euclidean ground costs between known support points $\eta$ (lines 3-7). On line 8, we invoke an MSBP solver (details in Algorithm \ref{alg:msbp_solver}) to obtain the vectors $\{u_\sigma\}$, which may be used as in \eqref{Moptstructure} to obtain the MSB ${\bm{M}}_{\rm{opt}}$. Finally in lines 10-16 we create the synthetic profiles -- for each time $t$ of interest (e.g., small timesteps \revg{$\Delta t$} from \revg{initial time $t_1$} to $t_{n_s}$) use the MSB to obtain the joint distribution $\mu_t$ of $\eta(t)$ (line 11), then therefrom for all $\beta\in\mathcal{B}$ the conditional distribution $\nu_t$ of $\xi(t)\mid\beta$ (line 13), which we sample in line 14 to obtain the execution state $\xi(t)\mid\beta$.
\begin{remark}
Algorithm \ref{alg:comp_framework} returns the \emph{maximum-likelihood} synthetic profiles for the given task. The \texttt{maxlikelihood} oracle in line 14 of the algorithm can be modified to produce, for example, \emph{mean} synthetic profiles by extracting the mean of the conditional distribution. {Additionally, a random sampling scheme may be used to generate multiple high-probability random synthetic profiles.}
\end{remark}

\noindent\textbf{The MSBP Solver.} Algorithm \ref{alg:comp_framework} makes use of an MSBP solver, which we detail in Algorithm \ref{alg:msbp_solver}. The solver forms $\bm{K}$ from $\bm{C}$ and initial random guesses for the vectors $\{u_\sigma\}$ (lines 1-2). The loop on lines 5-11 performs the Sinkhorn iterations \eqref{eq:MultimarginalSink} until all vectors $u_\sigma$ converge in the Hilbert metric. These vectors are then returned as the solution (recall that by \eqref{Moptstructure}, the MSB ${\bm{M}}_{\rm{opt}}$ is thereby uniquely defined). Note that line 7 requires the unimarginal projection to be computed -- this is the greatest computational bottleneck of the algorithm, and also where the  structure of $\bm{C}$ is often exploited to greatly accelerate the computation.

\revg{
\noindent\textbf{Computational Considerations.}
Practical implementations of Algorithms \ref{alg:comp_framework} and \ref{alg:msbp_solver} benefit from considering the computational stability thereof. In MSBP solvers (Algorithm \ref{alg:msbp_solver}), the parameter $\varepsilon$ is generally chosen to be a small, positive value; larger values lead to faster convergence of the solver at the cost of noisier solutions.
The dual variables $u_\sigma$ may be initialized to any positive values. In practice a random initialization within $(0,1)$ is used. Similarly, any positive cost tensor $\bm{C}$ may be used -- we use the Euclidean cost (line 4, Algorithm \ref{alg:comp_framework}), as is common throughout the related literature \cite{bondar-2024-psmsbp,bondar-2024-stochastic}. Since components of $\eta$ tend to vary greatly in scale, however, for each snapshot time $t_\sigma$ we scale the $\eta$ componentwise into the range $[0,0.1]$. The scaling is done so that the Euclidean distance takes into equal account all components thereof, and the range is chosen for stability of Algorithm \ref{alg:msbp_solver} (large values of $\eta$ $\rightarrow$ large values in $\bm{C}$ $\rightarrow$ near-zero values in $\bm{K}$, which can lead to underflow). 
Finally, $\mu_t$ as computed in line 11 of Algorithm \ref{alg:comp_framework} 
implicitly needs to have support covering all of $\mathcal{B}$ for line 13 to be feasible. To ensure this, we employ a standard KDE alignment of $\mu_t$ as computed via \eqref{BimargPostprocessing}-\eqref{eq:mu_interpolation}. Other domain alignment approaches may be used as well.
}

%% file: accuracy-eval.tex
\section{Effectiveness of generative profiling}\label{sec:accuracy-eval}

\subsection{Benchmarks, Hardware, and Workload Measurement}\label{sec:workload-measurement}

\noindent{\bf Benchmarks.} 
To evaluate generative profiling, we used the PARSEC~\cite{PARSEC08} benchmark suite, which includes programs with diverse execution characteristics. This suite has been widely used in prior work on  scheduling and resource allocation (e.g.,~\cite{dna-rtas21,Kim16,Yun16-memguard-journal}).  We ran these benchmarks in single-threaded mode, with the \texttt{smallsim} input. \reva{The observed execution times ranged from approximately 100 milliseconds to several seconds on our experimental platform.}

\noindent{\bf Hardware.} 
We performed measurements on an Intel Xeon E5-2618L v3 processor with 8 cores, 20MB 20-way set-associative L3 cache, and a single channel 8GB PC-2133 DDR4 DRAM. The machine supports both Cache Allocation Technology (CAT)~\cite{intelCAT} for cache allocation and DVFS for power management. CAT divides the shared L3 cache into $N_{\mathsf{ca}} = 20$ equal-size partitions. Using the method in~\cite{Yun16-memguard-journal}, we measured a maximum guaranteed bandwidth of 1.4 GB/s on each machine, which we divided into  $N_{\mathsf{bw}} = 20$ partitions of 70MB/s each using MemGuard~\cite{Yun16-memguard-journal}. We considered  $N_{\mathsf{freq}} = 12$ CPU frequency settings, ranging from 1.2GHz to 2.3GHz, at steps of 0.1GHz. Since the machine requires a minimum of 2 cache partitions per allocation, it supports 
$(N_{\mathsf{ca}}-1) \times N_{\mathsf{bw}} \times N_{\mathsf{freq}} = 19 \times 20 \times 12 = 4560$ different resource contexts in total. To ensure deterministic timing, we disabled cache prefetching and CPU hyperthreading via the system's BIOS. 

\noindent{\bf Measurements.} To gather the empirical profiles used as input to our model, we ran each workload on a dedicated core under the desired resource context, and collected data for 100 runs. In each run, we used the CPU’s performance counters to periodically record the number of instructions retired, cache requests, and cache misses ($\xi_1, \xi_2, \xi_3$, respectively in our model), once every 10 ms. 

Our  model requires only a small subset of empirical profiles for a small number of resource contexts as training inputs. However, to evaluate its accuracy, we also needed the ``ground-truth'' data. To obtain the ground-truth profiles, we performed measurements for all possible resource contexts for each benchmark program. The whole measurement process took us over two months to complete. Since we collected 100 profiles per context, the ground-truth data contain $456,\!000$ empirical profiles per benchmark.

\begin{figure*}[t!]
    \centering
    \begin{subfigure}[b]{0.490\textwidth}
        \centering
        \includegraphics[width=\textwidth]{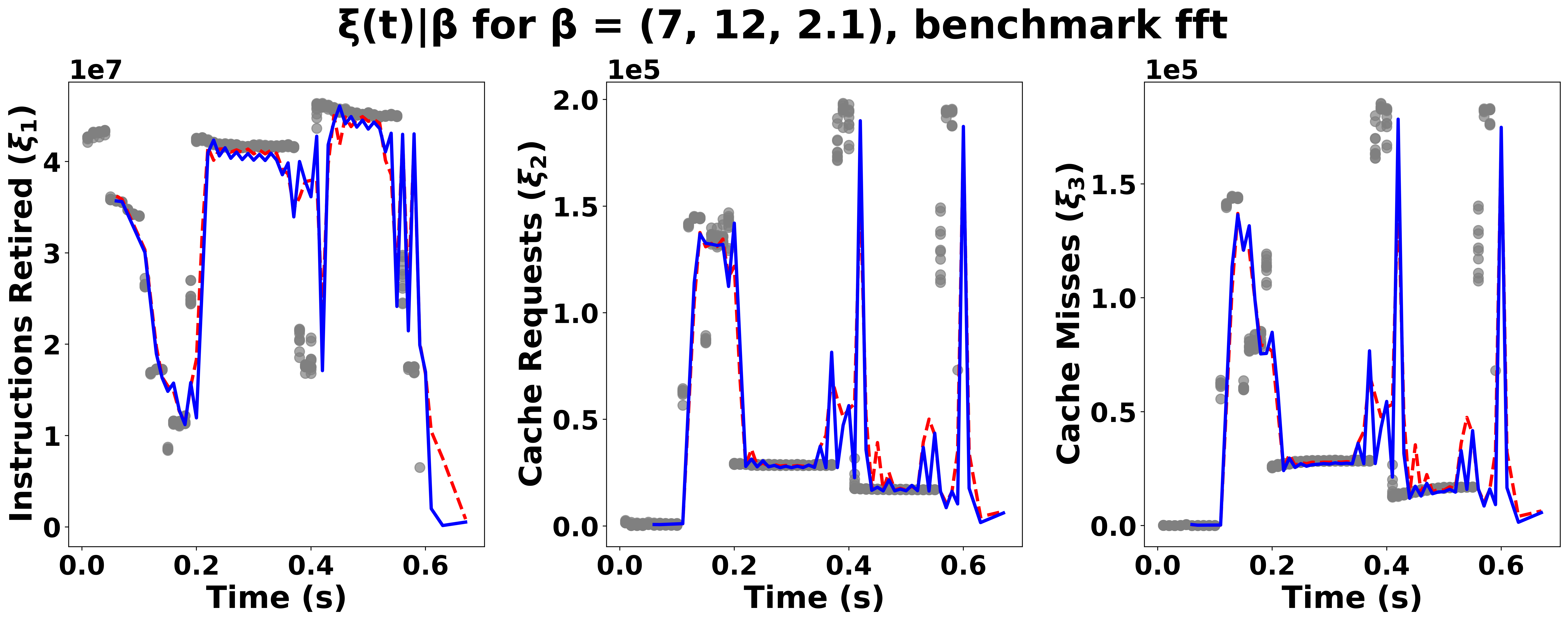}
    \end{subfigure}
    \hfill
    \begin{subfigure}[b]{0.490\textwidth}  
        \centering 
        \includegraphics[width=\textwidth]{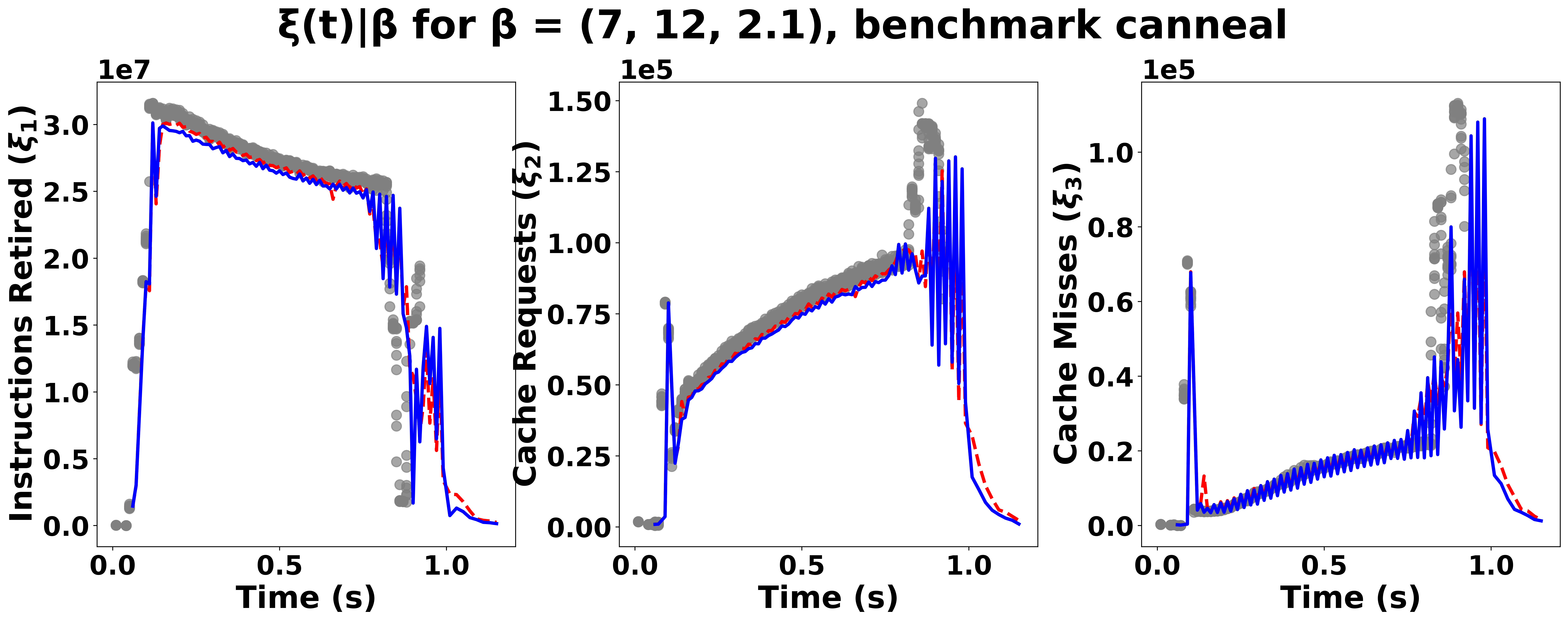}
    \end{subfigure}
    \caption{{{Maximum-likelihood synthetic profile ({\color{blue}blue}), mean synthetic profile ({\color{red}red}), and all empirical profiles ({\color{gray}grey}) for the benchmarks $\mathsf{fft}$ and $\mathsf{canneal}$ when $\beta=(7,\:12,\:2.1)^\top$}} on our default experimental platform.} 
    \vspace{-2ex}
    \label{fig:synth_profiles_accuracy}
\end{figure*}

\begin{figure*}[t!]
    \centering
    \begin{subfigure}[b]{0.490\textwidth}
        \centering
        \includegraphics[width=\textwidth]{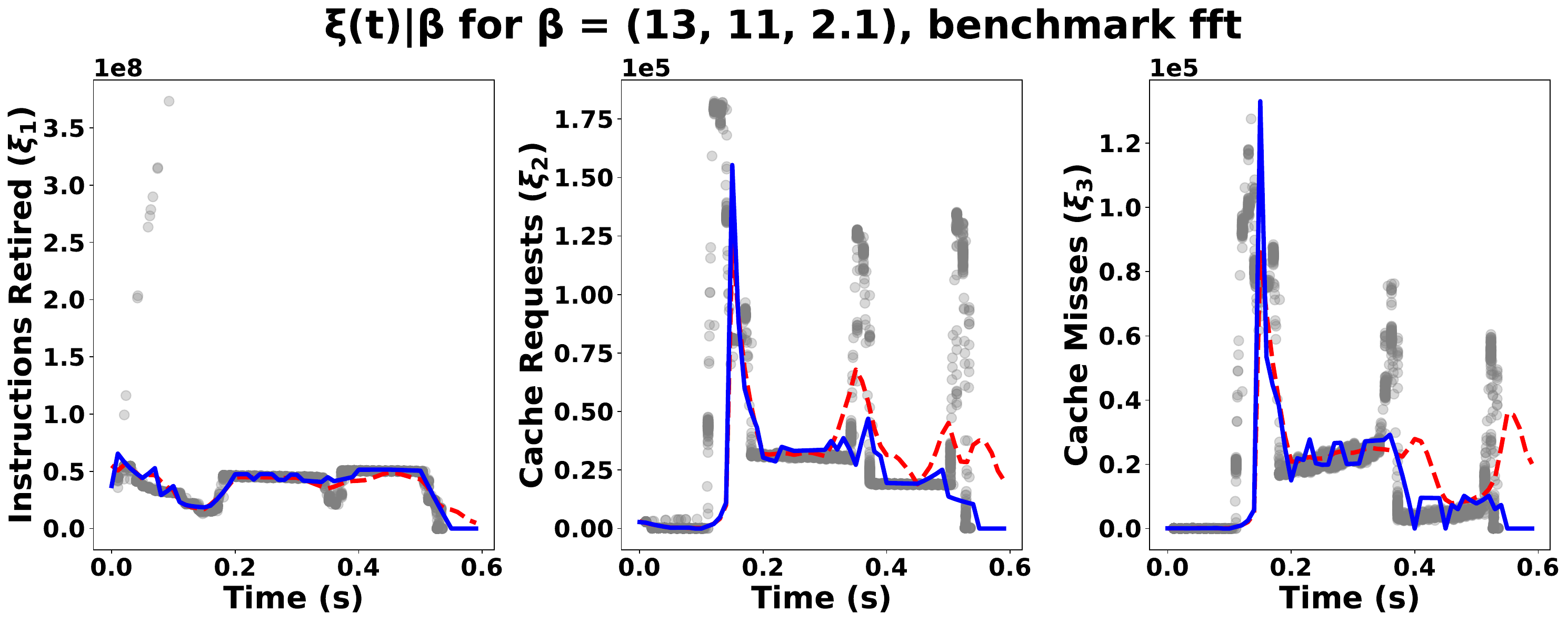}
    \end{subfigure}
    \hfill
    \begin{subfigure}[b]{0.490\textwidth}  
        \centering 
        \includegraphics[width=\textwidth]{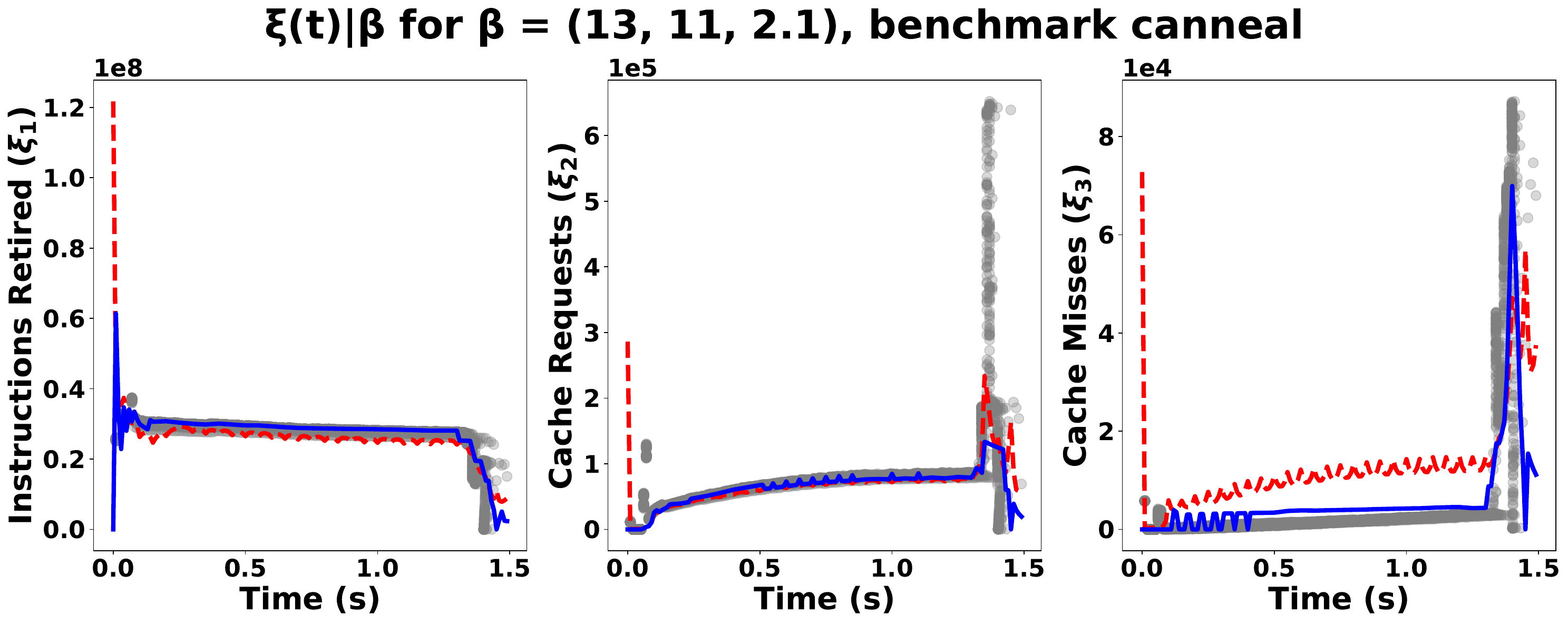}
    \end{subfigure}
    \caption{{{Maximum-likelihood synthetic profile ({\color{blue}blue}), mean synthetic profile ({\color{red}red}), and all empirical profiles ({\color{gray}grey}) for the benchmarks $\mathsf{fft}$ and $\mathsf{canneal}$ when $\beta=(13,\:11,\:2.1)^\top$ on $\mathsf{PlatformLarge}$}}.}
    \vspace{-2ex}
    \label{fig:synth_profiles_platformlarge}
\end{figure*}

\subsection{Accuracy of Generative Stochastic Model}\label{sec:accuracy}
\noindent{\bf Methodology.} Given the empirical profiles, we used the approach detailed in Section~\ref{sec:generative} to generate synthetic profiles for $\xi(t)\mid\beta$. For the experiment, we restricted the number of resource contexts $\beta$ and the number of profiles $n_d$ available to our MSBP solver. Specifically, for
\[ \mathcal{B} =  \{2,3,\dots,20\} \times \llbracket N_{\mathsf{bw}}\rrbracket \times \{1.2, 1.3, \dots, 2.3\} \]
we took a subset $\mathcal{B}'$ of $\mathcal{B}$ as training data. 


We then let $n_d=10$, i.e., we took 10 random profiles from the 100 available for each unique $\beta \in \mathcal{B}'$. From these limited profiles, we constructed the marginal distributions \eqref{defmusigmaempirical} for times $t_\sigma=0.05\cdot(\sigma-1)$, with $t_{n_s}$ chosen depending on the runtime of the benchmark. Thus, with $n_b=|\mathcal{B}'|=125$, each snapshot $\mu_\sigma$ had $n_dn_b=1250$ scattered data points.
For all experiments, our MSBP solver was configured with $\varepsilon=0.1$, $\texttt{tol}=1e\!-\!\!12$, and $\texttt{maxiter}=1e4$. Our dual variables $u_\sigma$ were initialized by $\texttt{rand}=$ uniform samples from $(0,1)$.

\begin{remark}
    The restriction of the number of profiles $(n_d)$ and the number of resource contexts $(n_b)$ provided as input to the MSBP solver has a practical motivation -- if either the number of tasks to profile or the number of possible resource contexts $|\mathcal{B}|$ is large, it is intensive both in time and computation to empirically generate a large number of profiles for all workloads and for each resource context. Our method allows for a significant reduction in profiling time as it requires only a small subset of $\mathcal{B}$ and a small $n_d$ to generate synthetic profiles for all $\beta\in\mathcal{B}$. For instance, for each benchmark task, only $n_dn_b=1250$ empirical profiles with data observations every 50 ms were sufficient to achieve an accuracy near the ground truth consisting of $456,\!000$ empirical profiles with data gathered every 10 ms (i.e., only 0.27\% of the original empirical data was used by our model).
    Moreover, running MATLAB R2024b on a machine with an AMD Ryzen 7 5800X CPU and 80GB of memory, the entire process of MSBP solution and synthetic profile generation for all $\beta\in\mathcal{B}$ took approximately {\bf 15 minutes} per benchmark task. In contrast, collecting the respective empirical profiles for all possible $\beta\in\mathcal{B}$ took many days for each benchmark task (e.g., approximately 2 weeks for the $\mathsf{canneal}$ benchmark).  
\end{remark}

Solving the MSBP over the marginals $\{\mu_{\sigma}\}_{\sigma\in\llbracket n_s\rrbracket}$, we generated the most likely conditional joint distributions $\mu_t/\int_\mathcal{X}\mu_td\xi$ per \eqref{ComputeConditional}, at times from 0 to $t_{n_s}$ every 10 ms, and for each $\beta\in\mathcal{B}$. As each of these joints are \emph{weighted} scattered distributions, we obtained from their aggregate a synthetic profile of the benchmark conditioned on $\beta$ by taking $\xi(t)\mid\beta$ to be the data point with the highest probability value in the respective conditional joint distribution, for each $t\in\{0,0.01,0.02,\dots,t_{n_s}\}$. We call this the \emph{maximum-likelihood} generative profile.

\reva{\noindent{\bf Generative profiles visualized.}  Fig.~\ref{fig:synth_profiles_accuracy} shows the mean and maximum likelihood generative profiles overlaid on 100 empirical profiles for two example benchmarks ($\mathsf{fft}$ and $\mathsf{canneal}$) under a resource context that was farthest from those in the training set, $\beta=\left(7,12,2.1\right)^\top$.
We present these sample results visually to highlight the fine-grained, time-dependent predictive accuracy of our method. Notice that the generative profiles closely align with the empirical profiles, even for  contexts in which $\beta$ is not available to the MSBP solver.}

\noindent{\bf Generality across platforms.} 
To illustrate the generality of our generative profiling, we also evaluated its performance on another, more powerful, multicore platform with more cores and larger cache and memory bandwidth sizes. This platform, referred to as $\mathsf{Platform\-Large}$, has an Intel Xeon E5-2683 v4 processor with 16 cores, 40MB L3 cache, and three single-channel 16GB PC-2400 DDR4 DRAMs. We repeated the same methodology as described in the previous experiment. 

Fig.~\ref{fig:synth_profiles_platformlarge} shows the generative profiles overlaid on the raw empirical data for benchmarks $\mathsf{fft}$ and $\mathsf{canneal}$ under a resource context of $\beta=\left(13,11,2.1\right)^\top$ on $\mathsf{PlatformLarge}$. 
\reva{We observe that our generative profiles accurately capture the time-varying execution characteristics for contexts outside of the training set $\mathcal{B}'$ and across different hardware platforms.}

\begin{remark}
    It may seem natural to choose the \emph{mean} value of each conditional joint distribution to construct our generative profiles. Thus, Fig.~\ref{fig:synth_profiles_accuracy} and Fig.~\ref{fig:synth_profiles_platformlarge} 
    also show generative profiles constructed in this way. As can be seen most clearly in the `tail-end' behavior of the mean profiles for $\mathsf{fft}$ in Fig.~\ref{fig:synth_profiles_platformlarge}, however, this choice can lead to inaccurate inference when $\mu_t$ is non-Gaussian. In such cases, first few moments, such as the mean and covariance, are misleading since the statistical typicality differs from the average. As the MSB allows us to obtain the \emph{maximum-likelihood} distributions $\mu_t$ in a nonparametric manner, we can obtain accurate results without any assumptions on the nature of these distributions.
    \label{rmk:maxlikelihood_vs_mean}
\end{remark}

\begin{figure}[t]
    \centering     \vspace{-3ex}
    \includegraphics[width=.75\linewidth]{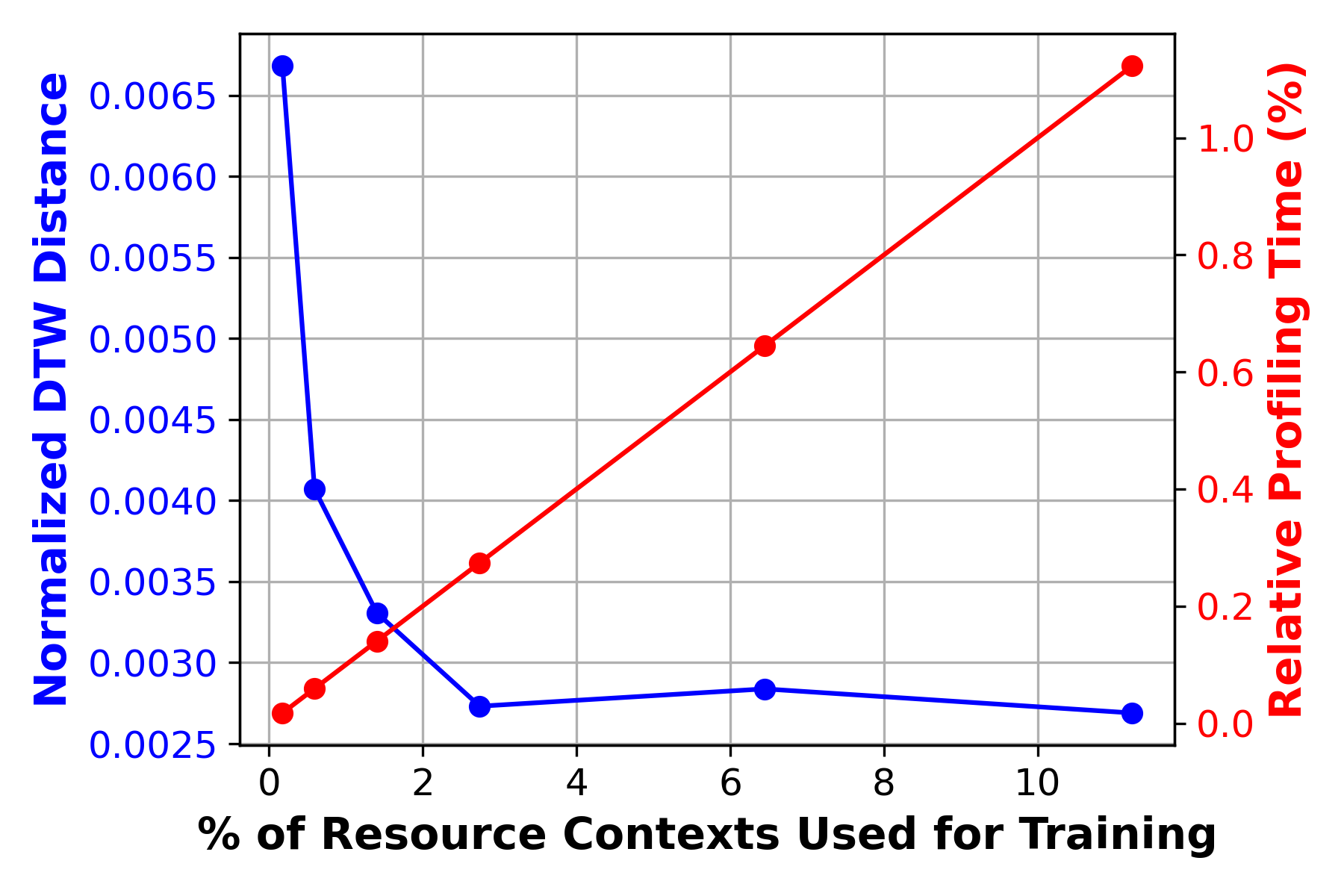}
    \caption{Accuracy and relative profiling time vs. \% of resource contexts used for training (workload $\mathsf{canneal}$).}     \vspace{-1ex}
    \label{fig:accuracy_runtime_tradefoff}
\end{figure}

\noindent{\bf Impact of the number of resource contexts for training.} 
It is difficult to meaningfully assess the absolute accuracy of our generative profiles in a quantitative manner, since the evaluation of any metric on the space of curves in $\mathbb{R}^2$ will depend on the magnitude of the curves themselves. In Fig.~\ref{fig:accuracy_runtime_tradefoff}, we  illustrate the impact of increasing the number of empirical resource contexts used for training (i.e., increasing $n_b$, while keeping $n_d$ unchanged) on the normalized Dynamic Time Warping (DTW) distance \cite{berndt1994using} (shown in \textcolor{blue}{blue}) between the ground-truth empirical profiles and the maximum-likelihood generative profiles for the $\mathsf{canneal}$ benchmark. We use the DTW distance as it is a common metric for evaluating the similarity between time series.
\revg{For each context $\beta$, we computed the DTW distance between the generated profile and the average empirical profile using the $\texttt{fastdtw}$ Python library \cite{salvador2007toward}. To normalize the result, we divided it by the length of the DTW reference path (the average empirical profile)\ multiplied by the maximum norm of the execution state vector $\xi$ along the path.}
We also display the measurement time (in \textcolor{red}{red}) for collecting the training data, normalized relative to that of the full empirical profiles. 

We observe that, as the proportion of resource contexts used for training increases, the DTW distance decreases sharply, while the profiling time increases linearly. The DTW distance  stabilizes at approximately 0.00273 when using only about 3\% of the resource contexts, and adding more contexts beyond this provides little to no benefit. In the case of $\mathsf{canneal}$, this corresponds to only 52 minutes of profiling time compared to approximately 2 weeks to collect the full set of empirical data. In other words, our method can generate accurate profiles for unseen resource contexts using only a tiny fraction of time. 


\begin{table}[t]
  \centering
  \caption{\revg{Per-benchmark accuracy (normalized DTW) of baseline vs. generated profiles, both trained on $\approx 6\%$ of resource contexts, averaged over all $\beta\in\mathcal{B}.$} \reva{Smaller DTW indicates better accuracy. Improvement \% is computed via $100 \times (\textnormal{baseline DTW} - \textnormal{generative DTW})\; / \textnormal{ baseline DTW}$.}}
  \label{tab:all_accuracies}
  \setlength{\tabcolsep}{6pt}
  \renewcommand{\arraystretch}{1}
  \resizebox{.95\columnwidth}{!}{%
    \begin{tabular}{|l|r|r|c|}
      \hline
      \textbf{}           & \textbf{Baseline} & \textbf{Generative} & \textbf{Improvement} \\
      \hline
      \hline
      $\mathsf{blackscholes}$   & 0.0429 & 0.0343 & 20.0\%\\
      \hline
      $\mathsf{bodytrack}$ & 0.0437 & 0.0375 & 14.2\%\\
      \hline
      $\mathsf{canneal}$ & 0.0227 & 0.0027 & 88.1\%\\
      \hline
      $\mathsf{dedup}$ & 0.0321 & 0.0191 & 40.5\%\\
      \hline
      $\mathsf{fft}$ & 0.0508 & 0.0478 & 5.9\%\\
      \hline
      $\mathsf{fluidanimate}$ & 0.0357 & 0.0275 & 23.0\%\\
      \hline
      $\mathsf{radiosity}$ & 0.0404 & 0.0380 & 5.9\%\\
      \hline
      $\mathsf{streamcluster}$ & 0.0549 & 0.0418 & 23.9\%\\
      \hline
    \end{tabular}%
  }
  \vspace{-3ex}
\end{table}

\reva{\noindent\textbf{Accuracy results for all benchmarks and contexts.} Table~\ref{tab:all_accuracies} shows the per-benchmark accuracy (reported as the normalized DTW and averaged over all $\beta\in\mathcal{B}$) of generative profiles compared to a baseline that interpolates between known resource contexts. Specifically, for each unknown context $\beta$, the baseline finds the two resource contexts $\beta', \beta'' \in \mathcal{B}'$ such that $\beta'_i \leq \beta_i \leq \beta''_i \;\forall i \in \{1, 2, \dots, b\}$, and averages their profiles at each snapshot to produce profiles for $\beta$.\footnote{Since the set $\mathcal{B}'$ is chosen such that it contains the minimum and maximum resource contexts,  $\beta'$ and $\beta''$ always exist.} We include this interpolation baseline as it represents a natural approach for estimating execution behavior from sparse resource contexts (for example,~\cite{kpart-hpca-2018} leverages a similar approach to predict execution behavior under sparse LLC allocations). Both the generative profiles and the baseline were trained on approximately 6\% of the resource contexts.

Observe that, for all benchmarks in Table~\ref{tab:all_accuracies}, our generative profiles are closer to the empirical profiles than the baseline in terms of average normalized DTW distance, by up to 88.1\% and by 27.7\% on average. The results demonstrate that our technique more accurately captures the time-varying execution characteristics of each task. 

}


%% file: usecases.tex
\section{Resource Allocation with Generative Profiles}\label{sec:motivation}

\subsection{Case Study: Dynamic Multicore Resource Allocation}\label{subsec:usecase1}
\noindent To demonstrate the practical utility of our technique, we present a representative case study of generative profiles in adaptive multicore resource allocation for real-time systems. Towards this, we propose a frequency-aware extension of DNA~\cite{dna-rtas21}, a phase-aware resource allocation method that dynamically allocates cache and memory bandwidth to tasks at fine-grained time intervals based on their profiles. 
Given a task's profiles, DNA uses clustering to identify a series of phases for each resource context. Each phase is a contiguous sequence of instructions with similar execution metrics (rates of instruction retirement, cache requests and cache misses). At run time, DNA utilizes this context-dependent phase information  to find an allocation that \reva{aims to} maximize the total instruction rate of the system. DNA has been shown to substantially improve schedulability, reduce deadline misses, and decrease latencies compared to static allocation~\cite{dna-rtas21}.

DNA only controls cache and memory bandwidth allocation, while assuming the system operates at a static CPU frequency, typically the maximum frequency. However, as shown in Fig.~\ref{fig:fftwithcontexts}, the instruction rate of a task in a phase is influenced by the CPU frequency, and this relationship varies across different phases and different cache and memory bandwidth allocations. For example, increasing the frequency may improve the instruction rate in a CPU-bound phase but have little to no effect in a memory-bound phase. This presents an opportunity to reduce energy consumption. Therefore, we introduce a phase-based DVFS extension of DNA, which uses task profiles to determine the optimal frequency for each phase under each cache and memory bandwidth allocation. Specifically, it selects the minimum frequency for each phase of a task such that the instruction rate in the phase does not increase by more than a threshold ratio $\epsilon > 0$, compared to the rate achieved under the static maximum frequency $f_{\mathsf{max}}$. 

Define the {\em reference utilization} of a task (or taskset) to be the utilization under the maximum resource context $\beta_{\max} = (N_{\mathsf{ca}},N_{\mathsf{bw}},f_{\mathsf{max}})$. In general, schedulability under global EDF depends on the maximum per-task utilization as well as the total taskset utilization~\cite{goossens2003priority}. Therefore, the smaller these values are for a taskset, the larger we can scale the taskset's threshold $\epsilon$ to enable more power savings.
If the taskset is deemed ``unschedulable'' based on the reference utilization, we set $\epsilon = 0$. Otherwise, we select the largest $\epsilon$ such that the taskset still remains schedulable (based on the reference utilization). For convenience, we refer to this extension of DNA as \dvfs and DNA with static maximum frequency as \static. 

\subsection{Prototype Implementation in Linux}\label{sec:prototype}

\noindent{\bf Overview.} For our experiments, we implemented a prototype of DNA and its DVFS extension on Linux~6.12 with the real-time patch \texttt{PREEMPT\_RT} enabled.\footnote{A prototype of DNA~\cite{dna-rtas21} exists on Xen; however, to avoid virtualization effects, we built our own prototype directly in Linux.}
We implemented DNA as a loadable kernel module, comprising approximately 1{,}900 lines of C code. The module  contains the core DNA algorithm, timer support, mechanisms for applying cache partitions, and calls into MemGuard to enforce memory-bandwidth partitions. In addition, we added about 150 lines to Linux’s in-tree codebase to track DNA metadata within task structures, such as workload type and current phase information.
We also provide user-level runtime scripts to load task phase information, generate task sets, and launch experiments. The run script uses cgroup v2 to confine real-time tasks to a subset of cores and synchronizes release times via the cgroup v2 freezer controller. DNA  is agnostic to the CPU scheduling policy---it simply identifies the current phase of each running task and uses that information to compute the number of cache and memory-bandwidth partitions per core. 

\noindent{\bf DNA invocation mechanism.} To invoke DNA, we used a periodic timer handler that executes on CPU 0 and is configured to fire every 5 ms. Upon each firing, CPU 0 identifies the currently executing task on each experimental core and determines its current phase; if some task has entered a new phase, CPU 0 invokes DNA to compute a new resource allocation and then applies it for all cores. In addition, we modified Linux's context switch mechanism to invoke DNA whenever a new task is assigned to a CPU, to ensure that a newly running task always receives a correct allocation.

\noindent{\bf Resource and frequency assignment.} We leveraged Intel's CAT hardware technology~\cite{intel-2015-cat} and MemGuard~\cite{Yun16-memguard-journal} for cache and memory bandwidth allocation. Using CAT, we can efficiently apply a cache allocation to CPUs by writing to MSR registers.
In contrast, MemGuard relies on hardware performance counters to monitor the number of L3 cache misses on each CPU over fixed time periods; if a CPU's number exceeds its assigned budget, MemGuard throttles it by scheduling a ``MemGuard'' thread, which spins until the next replenishment period. 
To work correctly, the ``MemGuard'' threads must have the highest priority, which does not hold under  \texttt{SCHED\_DEADLINE}. To address this, we modified MemGuard's throttle handler to avoid scheduling the ``MemGuard'' thread. Instead, it sets a special throttled bit on the affected CPU, triggers a reschedule on that CPU, and returns. The Linux scheduler was updated to detect this bit: when set, the \texttt{SCHED\_DEADLINE} scheduler class is bypassed, so that MemGuard's throttle thread can run until the bit is cleared, after which normal scheduling resumes. 
Additionally, we adapted MemGuard to work under  \texttt{PREEMPT\_RT}.

To enable frequency scaling, we extended the DNA module to set the CPU frequency each time DNA is invoked. This enables us to set the CPU frequency at the start of each phase, as is given by \dvfs for the current cache and memory-bandwidth allocation (c.f. Sec.~\ref{subsec:usecase1}). 

\noindent{\bf Run-time overhead.} Table~\ref{tab:dna-overhead} shows the runtime overhead. On average, it takes only about \(1\,\mu\text{s}\) to compute an allocation across all cores using our prototype, which is minimal.

\begin{table}[t]
  \centering
  \caption{Runtime Overhead of DNA in \(\mu\)s}
  \label{tab:dna-overhead}
  \setlength{\tabcolsep}{6pt}
  \renewcommand{\arraystretch}{1}
  \resizebox{.85\columnwidth}{!}{%
    \begin{tabular}{|c|c|c|c|c|c|}
      \hline
      \textbf{Count} & \textbf{Mean} & \textbf{90th} & \textbf{95th} & \textbf{99th} & \textbf{Max} \\
      \hline
      28347 & 1.068 & 1.536 & 2.008 & 6.204 & 10.727 \\
      \hline
    \end{tabular}%
  }
  \vspace{-3ex}
\end{table}

We next present an experimental evaluation of generative profiles using this prototype and the discussed case study.

%% file: sched-eval.tex
\section{Experimental evaluation}\label{sec:sched-eval}

\subsection{Experimental Setup}

\noindent{\bf Tasksets.}\label{sec:dna-workload}
We generated 20 tasksets, with taskset reference utilizations ranging between 0.2 and 4.0, at steps of 0.2. 
\reva{To generate a taskset, we iteratively added tasks, each with a utilization uniformly sampled from $[0.1, 0.4]$, into the taskset until its total utilization is at least the target utilization. The final task’s utilization was adjusted such that the taskset utilization equals its target utilization.}
For each task, we then randomly picked one of the benchmarks in the PARSEC suite as its workload. The task's profiles are obtained from the measurements discussed in Sec.~\ref{sec:workload-measurement}. We set the task's period (relative deadline) to be its {\em reference} WCET---defined as the observed WCET under the maximum resource context $\beta_{\max} = (N_{\mathsf{ca}},N_{\mathsf{bw}},f_{\mathsf{max}})$---divided by its utilization. 

\noindent{\bf Experiments.} 
We executed each taskset under \texttt{SCHED\_DEADLINE} on 4 experimental cores of our default multicore platform, with DNA prototype running. Each task's budget was set equal to its period.  Jobs were released periodically based on their periods over a one-minute interval and executed to completion. We recorded each job's release and completion time to compute its response time. We repeated this experiment for all generated tasksets. In total, we evaluated 400 unique tasksets. 

\noindent{\bf Methods under evaluation.} We considered three different inputs to our resource allocation algorithm: (1) complete {\em empirical} profiles; (2) our {\em generative} profiles learned from a small subset $\mathcal{B}'$ (containing  0.27\%) of the empirical profiles; and (3) {\em baseline} profiles, which consist of the same subset $\mathcal{B}'$ of empirical profiles used to produce our generative profiles, but with the profiles for each unknown context created using the simple interpolation method \reva{described in the accuracy evaluation of Sec.~\ref{sec:accuracy}.}
 
 \noindent{\bf Metrics.} For each method, we evaluated the {\em observed} schedulability and energy savings under \dvfs. For comparison, we also evaluated \static  with the highest CPU frequency setting ($f_{\mathsf{max}} = 2.3$~GHz),  using the full empirical profiles. To capture the effect of dynamic frequency scaling, we measured the {\em energy savings} based on dynamic power consumption, obtained by subtracting the static power consumption (measured when all CPUs were idle) from the total power consumption reported by Intel's \texttt{RAPL} energy registers~\cite{intel_dev_manual}. 
 The energy consumption for an experiment run was calculated as the product of the average CPU dynamic power consumption and the duration of the experiment. 
 

\begin{figure}[t]
\centering
\includegraphics[width=0.8\columnwidth]{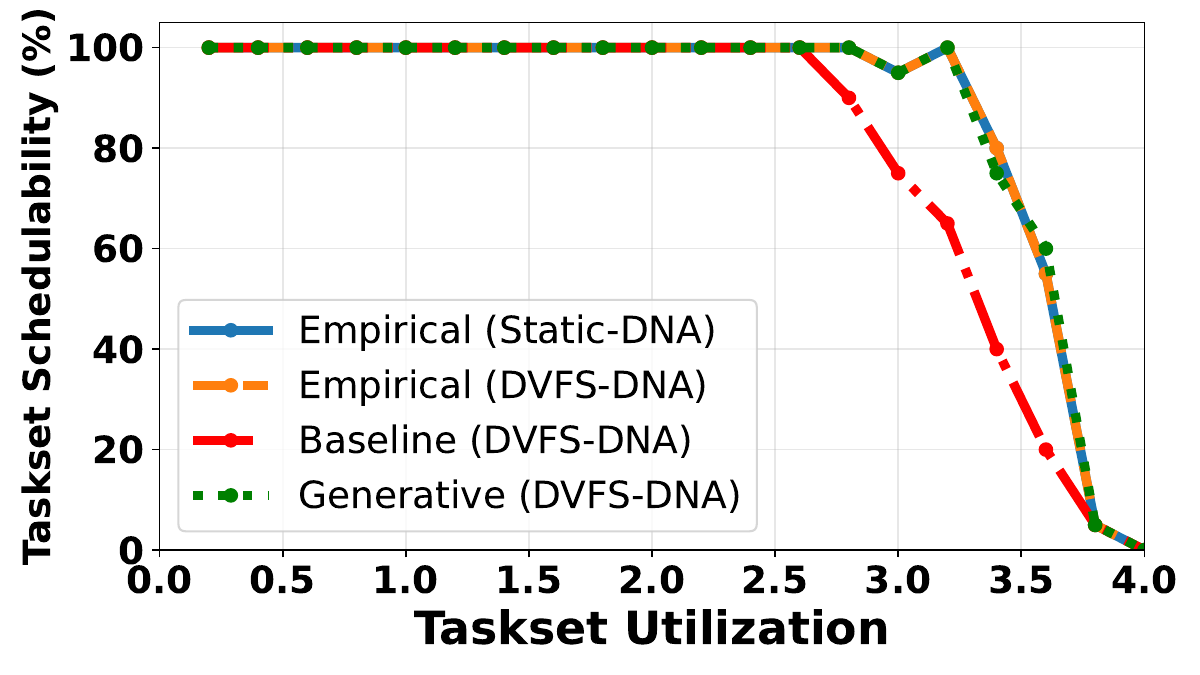}\vspace{-2ex}
\caption{Taskset Schedulability}
\label{fig:taskset_schedulability}
\vspace{-1ex}
\end{figure}

\begin{table}[t]
  \centering
  \caption{Avg. End-to-end Measurement \& Generation Time}
  \label{tab:profile_time}
  \setlength{\tabcolsep}{6pt}
  \renewcommand{\arraystretch}{1}
  \resizebox{.85\columnwidth}{!}{%
    \begin{tabular}{|l|r|r|}
      \hline
      \textbf{}           & \textbf{Total Time (hrs)} & \textbf{\# Empirical Profiles} \\
      \hline
      \textbf{Empirical}  & 231 & 456,000 \\
      \hline
      \textbf{Baseline}   & 0.64     & 1,250   \\
      \hline
      \textbf{Generative} & 1.14     & 1,250   \\
      \hline
    \end{tabular}%
  }
  \vspace{-3ex}
\end{table}


\begin{figure*}[t]
    \centering
    \includegraphics[width=\textwidth]{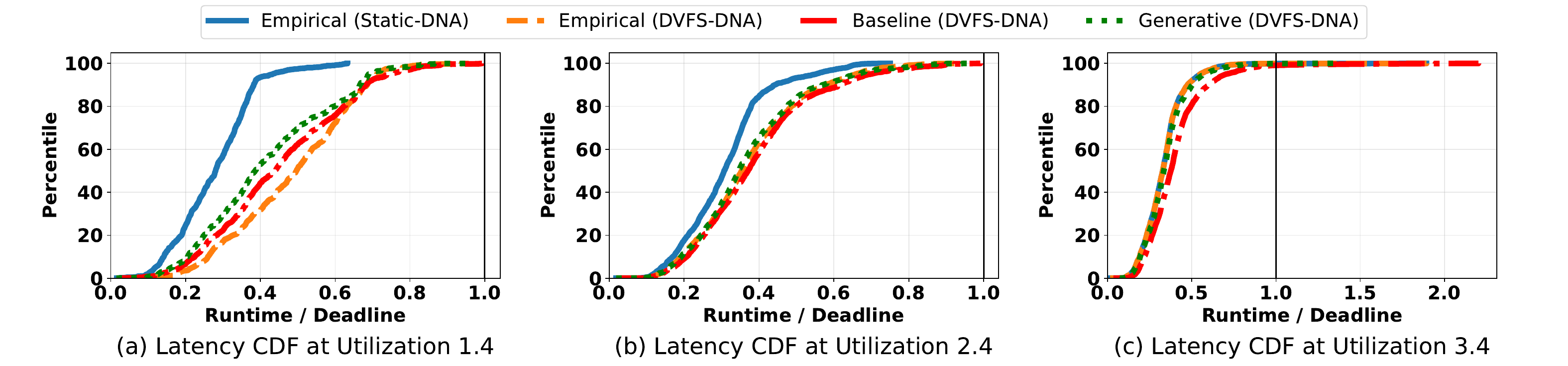}
    \caption{DNA Performance Comparison using Generative vs. Baseline vs. Empirical Profiles.}\vspace{-3ex}
    \label{fig:CDF_results}
\end{figure*}

\subsection{Results}
\noindent{\bf Schedulability results.}\label{sec:dna-schedulability}
Fig.~\ref{fig:taskset_schedulability} shows the percentages of schedulable tasksets across different taskset utilizations for \dvfs under the three input methods (generative, baseline, empirical), as well as \static running with full empirical profiles.  
We observe high schedulability for both empirical and generative profiles. For instance, the first unschedulable taskset only occurs at $3.0$ utilization (i.e., 75\% CPU load). The results highlight the benefits of context-dependent fine-grained profiles in dynamic resource allocation. 

Notably, the performance of generative profiles closely matches the empirical profiles' performance, demonstrating that our generative profiling is effective in practical applications. Table~\ref{tab:profile_time} further demonstrates the average end-to-end time needed to measure empirical and produce generative profiles per benchmark. Notice that  the generative profiling method requires just over an hour to collect training data and generate profiles per benchmark, compared to 231 hours (9.6 days) to obtain the complete empirical profiles. In other words, it is 230$\times$ more efficient while delivering comparable schedulability compared to using complete empirical profiles.   

Finally, we observe that generative profiles substantially outperform baseline profiles in terms of schedulability. Overall, the generative profiling method schedules 1.63$\times$ more tasksets than the baseline between 3.0 and 4.0 utilization. This shows that profiling a limited number of resource contexts and filling in the gaps with a simple interpolation method is insufficient to achieve equivalent real-time performance (compared to full empirical profiles). \\[-2ex]

\noindent{\bf Response time results.}\label{sec:dna-latency}
Fig.~\ref{fig:CDF_results} shows the CDFs of job response time over deadline at three different representative taskset utilizations (low, medium and high). The vertical line at 1.0 represents the point at which a job's response time equals its deadline. Again, we observe that our generative profiles achieve nearly identical job response times as empirical profiles do under \dvfs, and the results are consistent across different taskset utilizations. 

Notice that the job response times when using empirical profiles under \static are overall smaller than those obtained under \dvfs. This is expected since \static uses the maximum frequency setting while \dvfs aims to save energy. As the utilization increases, however, the gap between them becomes closer, indicating \dvfs's ability to scale the frequency to workload demands. On the contrary, the baseline method experiences noticeably longer tail response times, especially at higher utilizations. This is consistent with the observed schedulability results and further highlights the limitation of baseline profiles. \\[-2ex]

\noindent {\bf Energy saving results.}
We also evaluated the dynamic energy consumption of \dvfs under each profiling method compared to that of \static with full empirical profiles. The results show that \dvfs substantially reduces the energy consumption for all methods  compared to \static. On average, \dvfs  reduces the total energy consumption by $18.1$\% under empirical profiles and $14.8$\% under generative and baseline profiles. More importantly, for both generative and  empirical profiles, these savings are achieved while maintaining the same schedulability performance as \static (described earlier). The results demonstrate the needs and benefits of considering all three resource types (frequency, cache, memory bandwidth) in real-time resource allocation, which can now be achieved efficiently with generative profiles.

%% file: related.tex
\section{Related Work}\label{sec:related}

\noindent\textbf{Context-dependent profiles.}
Context-dependent profiles are increasingly being used in multicore real-time systems. For example, existing techniques have utilized resource-aware task WCETs to perform task mapping and shared resource allocation~\cite{xu-rtas19,xu2019holistic}.
As described in Section \ref{subsec:usecase1}, DNA (and its deadline-aware variant DADNA)~\cite{dna-rtas21} further exploit fine-grained  profiles to dynamically adapt resources allocated to a task to maximize execution progress, while~\cite{rasco-emsoft25} co-designs fine-grained resource allocation with scheduling. Likewise, recent work in real-time DVFS~\cite{rt-dvfs, hrt-dvfs} has started to explore phase-based or workload-aware DVFS techniques~\cite{acun2019dvfs,powerock,dpdfvs,SWEEP}. 
\reva{Other work leverages context-dependent profiles to improve general system performance and/or fairness via resource allocation~\cite{park-2019-copart, clite-hpca-2020, Pons-TPDS-2020, satori-isca-2021}. However, the above techniques often require extensive profiling of each task under each resource context to make allocation decisions. For multicore systems with many tasks and multiple resource types, such extensive profiling requirements can prohibit the adoption of these techniques.

One recent solution for generating context-dependent profiles uses limited empirical profiling to estimate WCETs under unseen memory bandwidth allocations~\cite{Sohal-RTS-2022}. However, its analysis requires detailed knowledge of the throttling mechanism (e.g., MemGuard), fixes all other resources, and focuses on memory transactions. Since our goal is to produce a maximum likelihood profile (not a worst-case timing analysis), our method is more general and directly predicts, without such restrictions, the evolution of a program’s \emph{multi-dimensional} micro-architectural state under \emph{multi-dimensional} unseen resource contexts.

}

\noindent\textbf{Learning-based prediction in scheduling.}
There is a substantial body of work that applies statistical methods to estimate probabilistic response times or deadline misses~\cite{Lu10,Lu12,Liu13,LuYue10}, although much of this work is restricted to the prediction of course-grained information such as job size and end-to-end response time.
Recently, Marković et al.~\cite{Markovic24} also used nonparametic learning to estimate statistical upper bounds on the mean, standard deviation and covariance of task execution time. In contrast to their work, we consider the much finer-grained microarchitectural execution characteristics and the time-varying stochastic relationship between multidimensional resource usage, conditioned on an allocated resource context. Prior work \cite{bondar-2024-psmsbp,bondar-2024-stochastic} considered the former, but did not address the dynamic correlation between a software's stochastic time-varying resource usage and the resources allocated to the software, as we do in this work.\\
\noindent\textbf{MSB.} The formulation and usage of MSB for statistical inference and learning, is a relatively recent endeavor \cite{elvander2020multi,noble2023tree,chen2023deep,bondar-2024-psmsbp,bondar-2024-stochastic}. The MSB can be understood as the stochastic version of the multi-marginal optimal transport \cite{pass2015multi}, which corresponds to vanishing entropic regularization ($\varepsilon\downarrow 0$). The convergence of the multi-marginal Sinkhorn algorithms in the continuous state space was established in \cite{marino2020optimal,carlier2022linear}. 

%% file: supplementaryRTAS2026.tex
\onecolumn

\begin{center} 
\LARGE{\bf{Supplementary Materials}} \end{center}

\vspace*{0.2in}


\renewcommand{\thesection}{S\arabic{section}} 
\setcounter{section}{0} 


\section{Maximum Likelihood Guarantee for MSBP}
The minimizer of \eqref{DiscreteMSBP}, denoted as $\bm{M}_{\rm{opt}}$, is guaranteed to be the statistically most parsimonious coupling that is consistent with the distributional snapshot (i.e., scattered point cloud) data. To see this explicitly, recall that for two given probability mass tensors $\bm{P},\bm{Q}$, the relative entropy a.k.a. the Kullback-Leibler divergence ${\mathrm{D_{KL}}}(\bm{P}\parallel\bm{Q}):=\langle\bm{P},\log\bm{P}-\log\bm{Q}\rangle$, where $\log$ acts elementwise. Let 
\begin{align}
\bm{K}:=\exp(-\bm{C}/\varepsilon), \quad Z := \!\!\sum_{i_1,\hdots,i_{n_{s}}}\!\!\left[\bm{K}_{i_1,\dots,i_{n_s}}\right] > 0,
\label{defK}    
\end{align}
and define the \emph{Gibbs distribution}
\begin{align}
\bm{M}_{{\mathrm{Gibbs}}} := \bm{K}/Z.
\label{GibbsDistribution}    
\end{align}
Theorem \ref{ThmMoptMinimizesRelativeEntropy} 
quantifies the qualifier ``most parsimonious" in terms of relative entropy optimality certificate.
\begin{theorem}[Relative entropy optimality]\label{ThmMoptMinimizesRelativeEntropy}
The MSB $\bm{M}_{\rm{opt}}$ is the unique minimizer of the relative entropy $\bm{M}\mapsto{\mathrm{D_{KL}}}(\bm{M}\parallel\bm{M}_{{\mathrm{Gibbs}}})$ subject to the observational constraints \eqref{DiscereteMSBPconstr}.    
\end{theorem}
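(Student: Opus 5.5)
The plan is to show that the objective of \eqref{DiscreteMSBP} and the relative entropy to the Gibbs distribution differ only by a positive factor and an additive constant on the constraint set. Strict convexity then transfers the unique minimizer from one problem to the other.

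\textbf{Step 1: rewrite the KL divergence.} Since $\log\bm{M}_{\mathrm{Gibbs}} = -\bm{C}/\varepsilon - \log Z$ elementwise, for any nonnegative tensor $\bm{M}$ we get
\begin{align*}
{\mathrm{D_{KL}}}(\bm{M}\parallel\bm{M}_{\mathrm{Gibbs}}) = \langle \bm{M},\log\bm{M}\rangle + \tfrac{1}{\varepsilon}\langle\bm{C},\bm{M}\rangle + \log Z\,\langle \bm{1},\bm{M}\rangle .
\end{align*}
We adopt the convention $0\log 0 = 0$, so both objectives are well defined on the closed orthant. By \eqref{defK}, $\bm{M}_{\mathrm{Gibbs}}$ has strictly positive entries, so the KL divergence is finite for every feasible $\bm{M}$.

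\textbf{Step 2: the total-mass term is constant.} Every feasible $\bm{M}$ satisfies \eqref{DiscereteMSBPconstr}. Summing ${\rm proj}_1(\bm{M})=\mu_1$ over its index gives $\langle\bm{1},\bm{M}\rangle = \langle \bm{1},\mu_1\rangle = 1$, because $\mu_1$ in \eqref{defmusigmaempirical} is a probability vector. Hence, on the feasible set,
\begin{align*}
\varepsilon\,{\mathrm{D_{KL}}}(\bm{M}\parallel\bm{M}_{\mathrm{Gibbs}}) = \langle\bm{C}+\varepsilon\log\bm{M},\bm{M}\rangle + \varepsilon\log Z .
\end{align*}
Since $\varepsilon>0$ and $\varepsilon\log Z$ does not depend on $\bm{M}$, the two constrained problems have the same set of minimizers.

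\textbf{Step 3: existence and uniqueness.} Here lies the only real content. The feasible set is nonempty: the product coupling $\bigotimes_\sigma \mu_\sigma$ satisfies all the marginal constraints. The feasible set is also compact, being a closed polytope inside the probability simplex. The map $\bm{M}\mapsto\langle\bm{M},\log\bm{M}\rangle$ is continuous on the closed orthant and strictly convex, because $x\mapsto x\log x$ is strictly convex on $[0,\infty)$. Adding the linear term $\langle \bm{C},\bm{M}\rangle$ keeps it strictly convex. So a minimizer exists by compactness and continuity. It is unique by strict convexity on a convex set, as already asserted in Section~\ref{subsec:DulaityForMSBP}.

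By Step 2 this minimizer is exactly $\bm{M}_{\rm opt}$, which proves the theorem.

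One point needs care: the support of the minimizer may be a strict subset of the full index set. The argument above avoids this issue by working on the closed orthant. Consequently, the product form \eqref{Moptstructure} is not needed for the theorem itself.
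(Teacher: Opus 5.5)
Your proposal is correct and follows essentially the same route as the paper. Both rewrite $\mathrm{D_{KL}}(\bm{M}\parallel\bm{M}_{\mathrm{Gibbs}})$ as $\frac{1}{\varepsilon}\langle\bm{C}+\varepsilon\log\bm{M},\bm{M}\rangle+\log Z$ on the feasible set and conclude uniqueness from strict convexity. Your version is slightly more careful than the paper's: you derive $\langle\bm{1},\bm{M}\rangle=1$ from the marginal constraints rather than presupposing that $\bm{M}$ is a probability tensor, and you give the existence argument via compactness, which the paper only asserts.
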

\begin{proof}
Using the definition of ${\mathrm{D_{KL}}}$ and \eqref{GibbsDistribution}, we find
\begin{align}\label{DKL1}
&{\mathrm{D_{KL}}}(\bm{M}\parallel\bm{M}_{{\mathrm{Gibbs}}}) = \langle\bm{M},\log\bm{M}\rangle -\langle\bm{M},\log\bm{K}\rangle + \log Z\!\!\!\!\!\!\underbrace{\sum_{i_1,\hdots,i_{n_{s}}}\left[\bm{M}_{i_1,\dots,i_{n_s}}\right]}_{=1,\,\text{since $\bm{M}$ is prob. mass tensor}}\!\!\!\!.
\end{align}
Using \eqref{defK}, we then re-write \eqref{DKL1} as
\begin{align}\label{DKL2}
&{\mathrm{D_{KL}}}(\bm{M}\parallel\bm{M}_{{\mathrm{Gibbs}}}) = \frac{1}{\varepsilon}\langle\bm{C}+\varepsilon\log\bm{M},\bm{M}\rangle + \log Z.
\end{align}
Notice that \eqref{DKL2} matches with the objective \eqref{DiscreteMSBPobj} up to translation and scaling by constants. However, translation and scaling of the objective cannot change the $\arg\min$. Since $\bm{M}_{\rm{opt}}$, by definition, satisfies the primal feasibility \eqref{DiscereteMSBPconstr}, $\bm{M}_{\rm{opt}}$ is a minimizer of $\bm{M}\mapsto{\mathrm{D_{KL}}}(\bm{M}\parallel\bm{M}_{{\mathrm{Gibbs}}})$ subject to the constraints \eqref{DiscereteMSBPconstr}. The uniqueness follows from the strict convexity of ${\mathrm{D_{KL}}}$ with respect to its first argument \cite[p. 235]{niculescu2025convex}. 
\end{proof}

\begin{remark}[Geometric interpretation]
Theorem \ref{ThmMoptMinimizesRelativeEntropy} has a geometric interpretation: the MSB $\bm{M}_{\rm{opt}}$ is the Kullback-Leibler projection of $\bm{M}_{\rm{Gibbs}}$ at the intersection of the hyperplanes defined by \eqref{DiscereteMSBPconstr}. This is a special case of information projection \cite{csiszar2003information}, and can be solved by alternating Kullback-Leibler projection to each hyperplane: an idea that goes back to Von Neumann \cite{von1949rings}; see also \cite{bregman1967relaxation,censor1994multiprojection}.
\end{remark}

\begin{remark}[From exponential to linear complexity]
We note that the MSBP \eqref{DiscreteMSBP} has exponential complexity, and is not amenable to a direct numerical solution. In Sec. \ref{subsec:DulaityForMSBP}, we explain how duality can be leveraged to circumvent this computational difficulty. In particular, in Sec. \ref{subsec:FromConditionalMSBPtoGenerativeProfiling}, we point out that the proposed algorithm has linear complexity in $n_{s}$, whereas \eqref{DiscreteMSBP} had complexity exponential in $n_{s}$.
\end{remark}

\begin{remark}[Understanding the maximum-likelihood guarantee] 
Let us explain the maximum-likelihood guarantee with respect to the schematic in Fig. \ref{fig:conceptschematic}. The guarantee in Theorem \ref{ThmMoptMinimizesRelativeEntropy} refers to that of the distributions in the bottom left of Fig. \ref{fig:conceptschematic}, i.e., these are the most probable distributions among all possible distributions consistent with the observed profiles. Once these distributions are learnt, the highest probability sample paths are drawn from these learnt distributions, resulting in the maximum-likelihood sample paths of $\xi(t)\mid\beta$ as shown in the bottom right of Fig. \ref{fig:conceptschematic}. So there are in fact two maximum-likelihood guarantees: the first is for the distributions, and the second is for the highest probability sample path given the distributions. The former notion is highly non-trivial while the latter is a standard statistical construct. Conditional MSBP is not an ad hoc algorithm, it is the unique solution of the former.

To further appreciate this idea, consider what happens if one replaces the conditional MSBP with some heuristic. For instance, one may interpolate (using some fixed interpolation scheme) the observed snapshots $\mu_1$ at $t_1$ and $\mu_2$ at $t_2$ to form $\mu_{\sigma}$ at any $t_{\sigma}\in(t_1,t_2)$, and then, as in our case, draw the highest probability sample from each such $\mu_{\sigma}$, constructing the sample path $\xi(t)\mid\beta$. Such a sample path does not have the maximum likelihood guarantee even tough the second step is identical to ours! This is because the maximum likelihood guarantee will then be lost in the first step at the distribution level.
\end{remark}


\section{Computation of Unimarginal and Bimarginal Projections}

Efficient computation of the unimarginal projection \eqref{DefUnimargProj} and the bimarginal projection \eqref{DefBimargProj} are made possible by exploiting the path structure of $\bm{C}$ given by \eqref{CostTensorPathStructure}, which dictates the structure of $\bm{K}$, and hence the structure of $\bm{K}\odot\bm{U}$. Specifically, we compute these projections as follows.

\begin{proposition}(\cite[Proposition 1]{bondar-2024-psmsbp})\label{Prop:MarginalExplicit}
Given a path-structured cost tensor $\bm{C}$ satisfying \eqref{CostTensorPathStructure}, and $\varepsilon>0$, consider the tensors $\bm{K}$ as in \eqref{defK} and $\bm{U}$ as in \eqref{Moptstructure}. Define matrices $K^{\sigma}:=\exp\left(-C^{\sigma}/\varepsilon\right)$ $\forall \sigma\in\llbracket n_{s}\rrbracket$. Then, the unimarginal projection
\begin{align}
\operatorname{proj}_\sigma(\boldsymbol{K} \odot \boldsymbol{U})=\left(u_1^{\top} K^{1 } \prod_{j=2}^{\sigma-1} \operatorname{diag}\left(u_j\right) K^{j}\right)^{\!\!\top}\!\!\odot u_{\sigma} \odot
\left(\left(\prod_{j=\sigma+1}^{{n_{s}}-1} K^{j-1} \operatorname{diag}\left(u_j\right)\right) K^{{n_{s}}-1} u_{n_{s}}\right) \;\;\forall \sigma \in \llbracket n_s \rrbracket,
\label{UnimargExplicit}
\end{align}
and the bimarginal projection
\begin{align}
& \operatorname{proj}_{\sigma_1, \sigma_2}(\boldsymbol{K} \odot \boldsymbol{U})= \operatorname{diag}\left(u_1^{\top} K^{1} \prod_{j=2}^{\sigma_1-1} \operatorname{diag}\left(u_j\right) K^{j}\right)\operatorname{diag}\left(u_{\sigma_1}\right) \!\prod_{j=\sigma_1+1}^{\sigma_2}\!\left(K^{j-1} \operatorname{diag}\left(u_j\right)\!\right) \nonumber\\
& \qquad\qquad\qquad\qquad\quad\operatorname{diag}\left(\!\left(\prod_{j=\sigma_2+1}^{{n_{s}}-1}\!\!K^{j-1} \operatorname{diag}\left(u_j\right)\right) K^{{n_{s}}-1} u_{n_{s}}\!\right)\qquad\forall\left(\sigma_1, \sigma_2\right) \in\left\{\llbracket n_s \rrbracket^{\otimes 2} \mid \sigma_1<\sigma_2\right\}.
\label{BimargExplicit}
\end{align}
\end{proposition}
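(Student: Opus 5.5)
The plan is to exploit the factorization of $\bm{K}\odot\bm{U}$ induced by the path structure \eqref{CostTensorPathStructure}. Since $\bm{C}_{i_1,\dots,i_{n_s}}=\sum_{j=1}^{n_s-1}C^j_{i_j,i_{j+1}}$ and the exponential turns sums into products, we get
\begin{align*}
[(\bm{K}\odot\bm{U})_{i_1,\dots,i_{n_s}}] = \prod_{\sigma=1}^{n_s}(u_\sigma)_{i_\sigma}\prod_{j=1}^{n_s-1}K^j_{i_j,i_{j+1}}.
\end{align*}
So each entry is the weight of a "path" $i_1\to i_2\to\cdots\to i_{n_s}$ in a chain with node weights $u_\sigma$ and edge weights $K^j$. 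This is a Markov-chain or matrix-product-state structure, and marginalization reduces to matrix–vector products.

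For the unimarginal projection \eqref{UnimargExplicit}, I would fix $i_\sigma=k$ and split the sum over the remaining indices into a left part (indices $i_1,\dots,i_{\sigma-1}$) and a right part (indices $i_{\sigma+1},\dots,i_{n_s}$). Because the summand factorizes across these two groups once $i_\sigma$ is fixed, the sum equals $(u_\sigma)_k$ times the product of two sums.

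\textbf{Left factor.} Summing $i_1$ first, then $i_2$, and so on, the left sum is by induction the $k$-th entry of the row vector
\[
u_1^\top K^1\operatorname{diag}(u_2)K^2\cdots\operatorname{diag}(u_{\sigma-1})K^{\sigma-1}.
\]
Each summation over $i_j$ contributes $\operatorname{diag}(u_j)$ followed by the next kernel.

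\textbf{Right factor.} Symmetrically, summing from $i_{n_s}$ backwards gives the $k$-th entry of the column vector
\[
K^{\sigma}\operatorname{diag}(u_{\sigma+1})\cdots K^{n_s-1}u_{n_s}.
\]

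Assembling the three pieces elementwise gives \eqref{UnimargExplicit}. The only delicacy here is matching the product index conventions in the statement. In particular, the right product $\prod_{j=\sigma+1}^{n_s-1}K^{j-1}\operatorname{diag}(u_j)$ starts with $K^\sigma$. The boundary cases $\sigma=1$ and $\sigma=n_s$ follow by interpreting empty products as the identity, or the all-ones vector.

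For the bimarginal projection \eqref{BimargExplicit}, I would fix $i_{\sigma_1}=k$ and $i_{\sigma_2}=\ell$, and split the free indices into three groups: those before $\sigma_1$, those strictly between $\sigma_1$ and $\sigma_2$, and those after $\sigma_2$.

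1. The outer groups give the same left and right vectors as above, evaluated at $k$ and $\ell$ respectively. These appear as the outer diagonal matrices in \eqref{BimargExplicit}.
2. The middle group sums over $i_{\sigma_1+1},\dots,i_{\sigma_2-1}$ of $K^{\sigma_1}_{k,i_{\sigma_1+1}}(u_{\sigma_1+1})_{i_{\sigma_1+1}}\cdots K^{\sigma_2-1}_{i_{\sigma_2-1},\ell}$. By the definition of matrix multiplication, this is the $(k,\ell)$ entry of $K^{\sigma_1}\operatorname{diag}(u_{\sigma_1+1})\cdots K^{\sigma_2-1}$.
3. The weights $(u_{\sigma_1})_k$ and $(u_{\sigma_2})_\ell$ become $\operatorname{diag}(u_{\sigma_1})$ on the left and the $\operatorname{diag}(u_{\sigma_2})$ factor inside the middle product on the right. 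Matching these with the displayed product $\prod_{j=\sigma_1+1}^{\sigma_2}K^{j-1}\operatorname{diag}(u_j)$ completes the identity.

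The main obstacle is not conceptual but bookkeeping: checking that the factorization step is legitimate, which is just distributivity of finite sums over products of nonnegative reals, and that every boundary and index convention in \eqref{UnimargExplicit}–\eqref{BimargExplicit} lines up. This includes empty products, the cases $\sigma_1=1$, $\sigma_2=n_s$, and $\sigma_2=\sigma_1+1$, and which $K^j$ each $\operatorname{diag}(u_j)$ sits next to. I would handle this by proving a single lemma by induction on the number of summed indices: $\sum_{i_a,\dots,i_{b-1}}\prod(\cdots)$ equals the corresponding interval matrix product. Both formulas then follow as corollaries. As a by-product, each projection costs $\mathcal{O}(n_s (n_dn_b)^2)$, consistent with the complexity claim in Sec.~\ref{subsec:FromConditionalMSBPtoGenerativeProfiling}.
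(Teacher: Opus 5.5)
Your proof is correct and follows the standard argument behind the cited result. The paper gives no proof of its own; it only remarks that the path structure of $\bm{C}$ makes $\bm{K}\odot\bm{U}$ factor along the chain, and your edge-by-edge summation is exactly what turns that factorization into the stated matrix--vector products. One precision on the boundary cases: at $\sigma=1$ (or $\sigma_1=1$) the displayed left factor still contains $u_1^\top K^1$, and at $\sigma=n_s$ (or $\sigma_2=n_s$) the right factor still contains $K^{n_s-1}u_{n_s}$, so reading empty products as the identity would give the wrong result there; the formulas hold only if the whole left (resp.\ right) factor is read as the all-ones vector (the identity matrix in the bimarginal formula), which is exactly what your sum over an empty index set produces.
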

Thanks to the above, computing \eqref{UnimargExplicit} and \eqref{BimargExplicit} are vectorized in the sense they only involve matrix-vector multiplications. \revg{We note while $\bm{C}$ in our case follows a natural path structure, as distributions $\mu_\sigma$ are ordered sequentially by time, this is not necessary for efficient computation of the projections. For instance, \cite{haasler2021multimarginal} shows that if the $\mu_\sigma$ are correlated via a tree graph, a similar result as above holds. Indeed, \cite{bondar2025optimal} showed that trees are the optimal (in the cost-minimizing sense) graph structure for the general class of graph-structured SBPs.} \newline